\newtheorem{definition}{Definition}
\newtheorem{thm}{Theorem}
\newcolumntype{M}[1]{>{\centering\arraybackslash}m{#1}}
\begin{document}
%
\title{Probabilistic Top-\textit{k} Dominating Query Monitoring over Multiple Uncertain IoT Data Streams in Edge Computing Environments}
%
%
%
%

\author{Chuan-Chi Lai,~\IEEEmembership{Member,~IEEE,}
		Tien-Chun Wang,
		Chuan-Ming Liu,~\IEEEmembership{Member,~IEEE,}
		and~Li-Chun Wang,~\IEEEmembership{Fellow,~IEEE}
		\IEEEcompsocitemizethanks{\IEEEcompsocthanksitem C.-C. Lai and L.-C. Wang are with Department of Electrical and Computer Engineering, National Chiao Tung University, 300 Hsinchu, Taiwan (E-mail: cclai1109@nctu.edu.tw; lichun@g2.nctu.edu.tw).
		\IEEEcompsocthanksitem T.-C. Wang is with Department of mobile device development, Compal Electronics, Inc., Taiwan (E-mail: ponyskywangroc@gmail.com).
		\IEEEcompsocthanksitem C.-M. Liu is with Department of Computer Science and Information Engineering, National Taipei University of Technology, 10618 Taipei, Taiwan (E-mail: cmliu@csie.ntut.edu.tw).}
}

%
%

\markboth{Preprint for IEEE Internet of Things Journal
}%
{Lai \MakeLowercase{\textit{et al.}}
}
%



\IEEEtitleabstractindextext{%
\begin{abstract}
Extracting the valuable features and information in Big Data has become one of the important research issues in Data Science. In most Internet of Things (IoT) applications, the collected data are uncertain and imprecise due to sensor device variations or transmission errors. In addition, the sensing data may change as time evolves. We refer an uncertain data stream as a dataset that has velocity, veracity, and volume properties simultaneously. This paper employs the parallelism in edge computing environments to facilitate the top-\textit{k} dominating query process over multiple uncertain IoT data streams. The challenges of this problem include how to quickly update the result for processing uncertainty and reduce the computation cost as well as provide highly accurate results. By referring to the related existing papers for certain data, we provide an effective probabilistic top-\textit{k} dominating query process on uncertain data streams, which can be parallelized easily. After discussing the properties of the proposed approach, we validate our methods through the complexity analysis and extensive simulated experiments. In comparison with the existing works, the experimental results indicate that our method can improve almost 60\% computation time, reduce nearly 20\% communication cost between servers, and provide highly accurate results in most scenarios.
\end{abstract}

\begin{IEEEkeywords}
Big Data, Internet of Things, Uncertain Data, Multiple Data Streams, Top-\textit{k} Dominating.
\end{IEEEkeywords}}

\maketitle

\IEEEdisplaynontitleabstractindextext

%
\IEEEpeerreviewmaketitle


\section{Introduction}\label{sec:introduction}

%
%
%
%
\IEEEPARstart{B}ig data analysis has been widely applied in many fields in recent years. The well-known characteristics of big data are the following Vs: \emph{Volume}, \emph{Velocity}, \emph{Variety}, \emph{Veracity}, \emph{Variability}, and \emph{Value}. Many modern applications and services need to deal with big data from multiple sources. Such a way can be recognized as a computing model over multiple uncertain data streams. For example, some specific applications, \emph{Massive Internet of Things} (Massive IoT)~\citep{8086146}, \emph{Smart City}~\citep{6740844}, and \emph{Location-Based Service} (LBS)~\citep{7929400}, can be recognized as the implementations of a  distributed/parallel sensing data processing model with multiple input uncertain data streams. The afore mentioned applications match at least three big data's characteristics: volume, velocity, and veracity. The volume of information is growing all the time so that an efficient parallel or distributed computing way is required. In massive IoT environments, the real-time monitoring is a typical application for detecting the events that need to be avoided or alleviated. In this case, the users/operators only concern the latest results for most queries and thus the information has time-limited (or velocity) feature. Due to the unreliability of data retrieval process, many data are inaccurate or uncertain. In such a case, the probabilities are used to represent the distribution of different situations. Hence, a massive IoT application has to effectively process the uncertain data to provide near real-time results with high precision (or veracity).

Although the big data can be resolved by the \emph{Cloud Computing} model, the response time (or latency) still can not meet the requirements of some near real-time IoT monitoring applications. \emph{Edge Computing}~\citep{8166730,8089336} thus has become the promising architecture to improve the response time for IoT applications in recent years. Most researchers focus on developing new techniques to edge computing from  system design, communications, networking, and resource management~\citep{GAI201646,GAI2018126,8425298} aspects. However, developing new effective techniques to process the IoT data efficiently from data science/engineering aspects is also very important and helpful to the IoT applications. Many researchers thus have proposed some algorithms for different types of queries (demands) to find the insightful knowledge in big data and make the precise decision. \emph{Skyline}~\citep{Hose2012,LIU201540,1583586} and \emph{Top-$k$}~\citep{7095576,Mouratidis:2006:CMT:1142473.1142544,7831369} queries are common research topics. However, the skyline and top-$k$ queries lead to some discrepancies in the search results. Nowadays, such two queries cannot satisfy the demand of some modern applications. Therefore, a new query, \emph{Top-$k$ Dominating}~\citep{7166329,7845627,8248785,8267252}, in certain data combined the above two search features comes into being.

In general, an uncertain data object is usually modeled with multiple probabilities which represents the probabilities of the object's occurrences or errors for some applications, such as IoT data analysis. Such a data model makes the query process much more complicated.
Some works~\citep{Zhang:2010:TPT:1773653.1773676,10.1007/978-3-319-05810-8_26,10.1007/978-3-319-11749-2_19} have discussed the \emph{Probabilistic Top-$k$ Dominating} (PTD) query processing on uncertain data streams.
In traditional, to handle probabilistic top-$k$ dominating queries, the system will compute the \emph{dominant scores} between different data objects and find out $k$ objects having the highest dominant scores. Such a straightforward process needs $O((n|U|)^2)$ computation time, where $n$ is the number of instances in an object and $U$ is the input data set. As the amount of data increases dramatically, the system needs solutions to effectively reduce the computational complexity. In the traditional centralized systems, R-tree~\citep{Guttman:1984:RDI:602259.602266} is one of the most popular indexing structure to improve the performance of query processing. Due to the spatial characteristic of R-tree, the system can get a great performance improvement on following operations: object search, value comparison, and pruning. However, utilizing centralized data structures and algorithms already can not handle the big data lead by the IoT era. Therefore, it is reasonable to improve the efficiency of computations using modern parallel and distributed computations. In this paper, we propose a \emph{Probabilistic Top-$k$ Dominating query process over Multiple Uncertain data Streams} (PTDMUS) algorithm to improve the efficiency of searching $k$ data objects that have the highest dominate scores for the distributed real-time IoT monitoring applications. The contributions of this work are listed as follows.
\begin{itemize}
	\item We provide a parallel processing model utilizing the R-trees, $k$-skyband~\citep{Mouratidis:2006:CMT:1142473.1142544}, and a threshold for effectively precluding irrelevant objects in advance, and thereby significantly reduce the computational overhead. Such an idea can be used to solving some other similar types of queries.
	\item We propose an estimated theorem for the distributed computing environments to effectively predict the time that a data object has the chance to become the final result, and consequently decreases the frequency of dominance checks on the edge computing nodes.
	\item In addition, we present the theoretical analysis of PTDMUS on time complexity, space complexity, and transmission cost in the average and worst cases.
	\item The simulation result indicates that PTDMUS outperforms the conventional method with 60\% computation time and 20\% transmission cost while keeping near 100\% precision and recall of final query result in most scenarios.
\end{itemize}

The rest of paper is organized as follows. Some related researches are reviewed in Section~\ref{sec:relatedwork}. Section~\ref{sec:preliminaries} presents the definitions, notations, and problem statement of this work. Section~\ref{sec:ptdmus} discusses the proposed solutions with some algorithms and running examples in details.
Some theoretical analysis and discussion are explained in Section~\ref{sec:analysis}.
Simulation results are presented in Section~\ref{sec:simulation}. Finally, we give concluding remarks in Section~\ref{sec:conclusion}.

\vspace{-15pt}
\section{Related Work}
\label{sec:relatedwork}
Many researchers have discussed range, skyline, and top-$k$ queries over uncertain data in distributed computing environments. Nowadays, the above query types can not satisfy the demand of some modern applications.
Hence, we focus on a more complex query, top-$k$ dominating query, in this work. In the balance of this section, we introduce the related works about top-$k$ dominating query processing from the data science aspect. The comparisons of conventional works are summarized in Table~\ref{compared_methods} and each work will be described in followings.

Miao et al.~\citep{7166329} proposed a Bitmap Indexing Guided (BIG) algorithm for improving the performance of processing top-$k$ dominating query on large incomplete dataset. Han et al.~\citep{7845627} provided a table-scan-based method with presorted results for improving the performance/efficiency of top-$k$ dominating query computations on massive data in batch computing model.
Amagata et al.~\citep{8248785} mapped multiple input datasets into a data space and then proposed a method which generates virtual points for effectively precluding unnecessary data objects in the data space.
Ezatpoor et al.~\citep{8267252} applied BIG algorithm~\citep{7166329} to MapReduce framework for providing a parallel computing model to enhance the performance of processing top-$k$ dominating query on large incomplete dataset. However, only~\citep{8248785} and~\citep{8267252} proposed the algorithms for distributed computing environments. Furthermore, the above approaches for certain data did not support continuous query processing in real-time IoT monitoring applications.

For uncertain data, only few studies~\citep{10.1007/978-3-319-11749-2_19,10.1007/978-3-319-05810-8_26,Zhang:2010:TPT:1773653.1773676} have explored the top-$k$ dominating query processing until now. Zhang et al.~\citep{Zhang:2010:TPT:1773653.1773676} proposed a threshold-based algorithm to prune the irrelevant objects and thus improved the performance of computation for top-$k$ dominating query. Zhan et al.~\citep{10.1007/978-3-319-05810-8_26} developed new pruning techniques by utilizing the spatial indexing and statistic information while considering the maximum/upper and minimum/lower bounds of probabilistic dominance, which reduced computational and I/O costs. Li et al.~\citep{10.1007/978-3-319-11749-2_19} proposed a method to postpone the unnecessary calculation if the query results did not change dramatically in a certain period of time and the computational cost could be reduced. However, these works did not consider how to process continuous queries over uncertain data with parallelisms for real-time IoT monitoring applications based on edge computing environments.

In summary, to the best of our knowledge, none of existing works simultaneously consider following characteristics: uncertain data, continuous probabilistic top-$k$ dominating query, distributed computing, and the real-time requirement for IoT monitoring. This shows that probabilistic top-$k$ dominating query processing over uncertain data for edge-enabled IoT real-time monitoring applications remains a big challenge.

\begin{table}[ht]
	\renewcommand{\arraystretch}{1.2}
	\caption{Comparisons of Related Works and the Proposed Method}
	\label{compared_methods}
	\centering
	\begin{tabular}{lM{1.3cm}M{1.35cm}M{1.4cm}M{1.2cm}}
		\hline
		& \multicolumn{4}{c}{\textbf{Characteristics}} \\
		\cline{2-5}
		\textbf{Methods} & \textbf{Data Type} & \textbf{Continuous Query} & \textbf{Distributed Computing} & \textbf{Real-time} \\
		\hline
		\hline
		BIG~\citep{7166329} & Certain & \texttimes & \texttimes & \texttimes \\
		TDTS~\citep{7845627} & Certain & \texttimes & \texttimes & \texttimes \\
		SFA~\citep{8248785} & Certain & \texttimes & \checkmark & \texttimes \\
		MRBIG~\citep{8267252} & Certain & \texttimes & \checkmark & \texttimes \\
		TPTD~\citep{Zhang:2010:TPT:1773653.1773676} & Uncertain & \texttimes & \texttimes & \texttimes \\
		PTOPK~\citep{10.1007/978-3-319-05810-8_26} & Uncertain & \texttimes & \texttimes & \texttimes \\
		PEA~\citep{10.1007/978-3-319-11749-2_19} & Uncertain & \checkmark & \texttimes & \checkmark \\		
		\textbf{PTDMUS} & Uncertain & \checkmark & \checkmark & \checkmark \\
		\hline
	\end{tabular}
\end{table}

\section{Preliminaries}
\label{sec:preliminaries}
In this section, we introduce the fundamental assumptions, the system model, and the problem statement.
%

\subsection{Fundamental Assumptions}
Three kinds of uncertain data models have been proposed and discussed in~\citep{Wang2013}: fuzzy, evidence-oriented, and probabilistic models. In this work, we refer to the last model with discrete case and the uncertain data object can be defined as Definition~\ref{def_udo}.
\begin{definition}[\textbf{Uncertain Data Objects}]
	\label{def_udo}
	Given a $d$-dimensional uncertain data set $U$, each uncertain data object $u\in U$ with $n$ instances is a probability distribution over the $d$-dimensional space. Each instance $u^{a}$ of $u$ has $d$ attributes, $u^{a}[1], u^{a}[2], \cdots, u^{a}[d]$, where $a=1,\dots, n$, and is associated with a probability $Pr(u^{a})$, where $Pr(u)=\sum_{a=1}^{n}Pr(u^{a})=1$.
\end{definition}

A simple example of a two-dimensional uncertain data set is presented in Table~\ref{2d_uncertain_data_set}, in which each uncertain data object has three possible instances. For example, object $u_1$ has three instances $u_1^1$, $u_1^2$, and $u_1^3$ with probabilities $0.4$, $0.3$, and $0.3$, respectively. It means that $u_1$ may occur in three possible cases with different corresponding probabilities and the total probability of all cases will be 1. Note that we will use attribute or dimension interchangeably.
%
\begin{table}[ht]
	\centering
	\caption{An example of a two-dimensional uncertain data set}
	\label{2d_uncertain_data_set}
	\renewcommand{\arraystretch}{1.3}
	\begin{tabular}{|c|c|c|c|}
		\hline
		\textbf{Object} & \textbf{Instance} & \textbf{Object} & \textbf{Instance} \\ \hline
		\multirow{3}{*}{$u_1$} & $u_1^1[0.4,28,7]$ & \multirow{3}{*}{$u_2$} & $u_2^1[0.6,21,16]$ \\ \cline{2-2} \cline{4-4}
		& $u_1^2[0.3,31,11]$ & & $u_2^2[0.1,17,21]$ \\ \cline{2-2} \cline{4-4}
		& $u_1^3[0.3,35,8]$ & & $u_2^3[0.3,15,17]$ \\ \hline
		\multirow{3}{*}{$u_3$} & $u_3^1[0.7,72,33]$ & \multirow{3}{*}{$u_4$} & $u_4^1[0.8,48,19]$ \\ \cline{2-2} \cline{4-4}
		& $u_3^2[0.2,67,30]$ & & $u_4^2[0.1,43,23]$ \\ \cline{2-2} \cline{4-4}
		& $u_3^3[0.1,64,35]$ & & $u_4^3[0.1,52,26]$ \\ \hline
	\end{tabular}
\end{table}

If we map the instances of the uncertain data objects onto a $d$-dimensional space, each uncertain data object $u$ can be represented by a \emph{minimum bounding rectangle}, MBR($u$), which is the minimum rectangle containing all the instances of $u$ in the space. Let $u^{\max}$ and $u^{\min}$ respectively denote the maximum and minimum corners of $u$ where $u^{\max} [\alpha]=\max_{1\leq a \leq n} {u^{a}[\alpha]}$ and $u^{\min} [\alpha]=\min_{1\leq a \leq n} {u^{a}[\alpha]}$, where $\alpha=1, \ldots,d$. Then, MBR($u$) can be represented by $[u^{\min}, u^{\max}]$ where $u^{\min}=(u^{\min}[1],u^{\min}[2],\dots,u^{\min}[d])$ and $u^{\max}=(u^{\max}[1],u^{\max}[2],\dots,u^{\max}[d])$.
Note that $u^a[0]$ is the probability value $Pr(u^a)$ of instance $u^a$.
According to the example in Table~\ref{2d_uncertain_data_set}, $u_1^{\max}[1]=35, u_1^{\min}[1]=28, u_1^{\max}[2]=11,$ and $u_1^{\min}[2]=7$, so MBR($u_1$) is $[u_1^{\min}, u_1^{\max}] = [(28,7),(35,11)]$. Fig.~\ref{ex:r-tree} shows the MBRs of each data objects indexed by an R-tree for the example in Table~\ref{2d_uncertain_data_set}. The four uncertain data objects $u_1$, $u_2$, $u_3$, and $u_4$, on a 2D plane with the associated MBRs and each object has three instances respectively. Note that we use the bulk loading algorithm~\citep{DBLP:conf/caise/LeeL03} to construct the R-trees~\citep{Guttman:1984:RDI:602259.602266} in our work since it can utilize the space and avoid the overlapping issue between MBRs, thus improving the query time.
\begin{figure}[ht]
	\centering
	\includegraphics[width=0.5 \textwidth]{./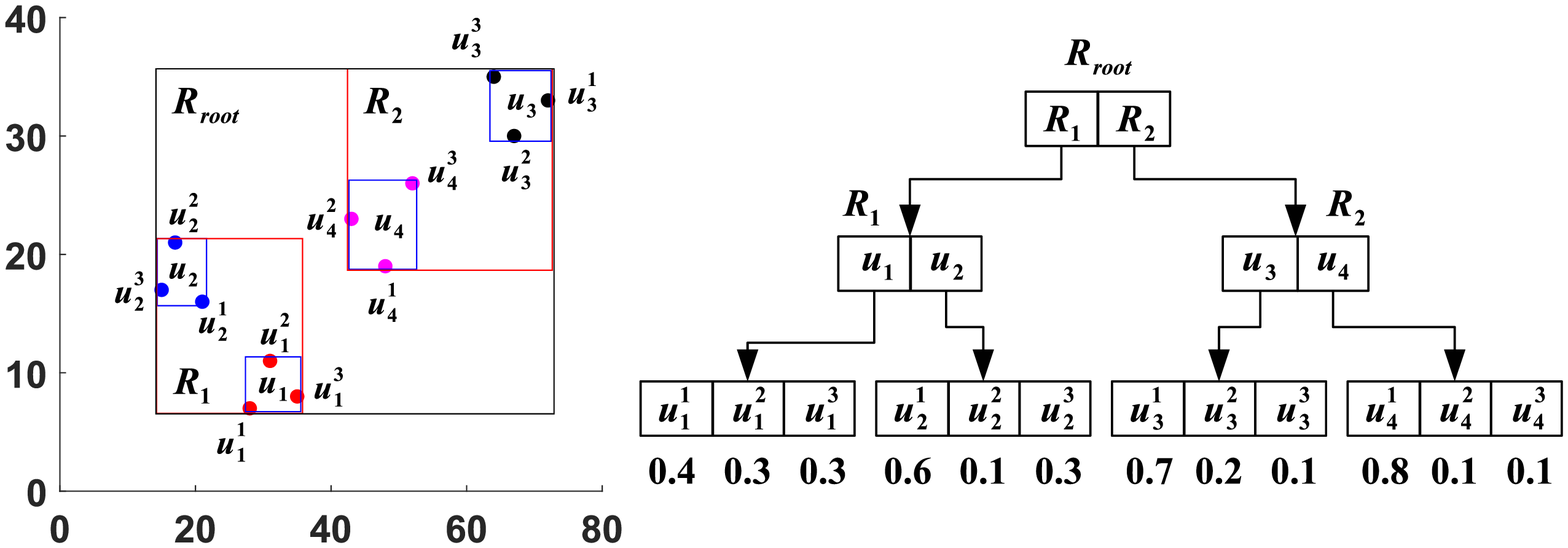}
	\caption{Example of an R-tree with four data objects (MBRs) from the data set in Table~\ref{2d_uncertain_data_set}.}
	\label{ex:r-tree}
\end{figure}

We consider multiple uncertain data streams and denote an Uncertain data Stream as \emph{US}, where the uncertain data objects are generated with time and will be invalid after a period of time. Data streams play an important role in the era of big data with the advance of IoT technology and have attracted much attention for years. Most of the related researches use the sliding window model and focus on the recent data in the stream. Our work follows this trend.

In a data stream, each data object has a time stamp to denote the time for entering the system. This scenario is usually modeled as the sliding window and it can be defined as below.
\begin{definition}[\textbf{Sliding Window}]
	\label{def_sw}
	Suppose the sliding window, \emph{SW}, is of size $|SW|$. Then the data newly generated will be invalid after $|SW|$ time instances. We use $SW[t-|SW|+1,t]$ to denote the set of the uncertain data objects in the current sliding window at time $t$. The considered sliding window follows the first-in-first-out rule for keeping the objects.
\end{definition}
In this paper, we use $u_1, u_2, \dots, u_{|\mbox{\emph{SW}}|}$ to denote the uncertain data objects in \emph{SW} (i.e., \emph{SW}= $\{u_1, u_2, \dots, u_{|\mbox{\emph{SW}}|}\}$, according to the arrival time of each data object. To search the top-$k$ dominating objects, the system will use the dominant scores obtained from the skyline query. In our work, we assume that the value of a dimensional attribute is the smaller, the better. The dominant relations between instances can be defined as follows.
\begin{definition}[\textbf{Instance-level Dominance}]
	\label{def_IlD}
	Given two uncertain data instances of two different uncertain data objects $u_i^{a}$ and $u_j^{b}$ where $a, b\in [1,n]$ and $i\ne j$, if the condition $(\forall \alpha\in[1,d], u_i^{a}[\alpha]\leq u_j^{{b}}[\alpha])\wedge(\exists \beta\in[1,d], u_i^{a}[\beta]< u_j^{b}[\beta])$ holds, we say $u_i^{a}$ dominates $u_j^{b}$ and it is denoted as $u_i^{a}\prec u_j^{b}$.
\end{definition}

In short, none of $u_j^{b}$'s attributes is better (smaller and except for equal) than $u_i^{a}$'s the corresponding dimensional attribute. Since a data object may has multiple instances, we can classify the object-level dominance into three cases. The relevant definitions are presented in the following.
\begin{definition}[\textbf{Object-level Dominance}]
	\label{def_OlD}
	Suppose there are two uncertain data objects $u_i$ and $u_j$ and each object has $n$ instances. If $u_i$ is considered as a dominator, the relation between $u_i$ and $u_j$ can be classified by using following cases:
	\begin{enumerate}
		\item Complete Dominance: all the instances of $u_i$ dominate all the instances of $u_j$, denoted as $u_i \prec u_j$.
		\item Partial Dominance: some instances of $u_i$ dominate some instances of $u_j$, denoted as $u_i \precsim u_j$.
		\item Missing Dominance: no instances of $u_i$ dominate any instance of $u_j$, denoted as $u_i \nprec u_j$.
	\end{enumerate}	
	In summary, the probability of $u_i$ dominating $u_j$ can be generally expressed as
	\begin{equation}\label{eq_old}
	Pr[u_i\prec u_j]=\sum_{a=1}^{n}(Pr(u_i^a)\times \sum_{\forall u_j^b\in u_j, u_i^a\prec u_j^b}Pr(u_j^b)).\nonumber
	\end{equation}
\end{definition}

According to the above definitions, we can derive the score of the dominant relation between two objects according to the following definition.
\begin{definition}[\textbf{Dominant Score of an Object}]
	\label{def_dpo}
	Given an uncertain data object $u_i$ with $n$ instances, the expected dominant score of an instance $u_i^{a}$ can be derived by
	\begin{equation}\label{eq_dpi}
	dom(u_i^{a})=\sum_{u_j\in U,i\neq j} \{Pr(u_i^{a})\times Pr(u_j^{b})|u_i^{a}\prec u_j^{b}\}.\nonumber
	\end{equation}
	Then, the dominant score of the uncertain object $u_i$ is defined as
	\begin{equation}\label{eq_dpo}
	dom(u_i)=\sum_{a=1}^{n}dom(u_i^{a}).\nonumber
	\end{equation}
\end{definition}

Consider the example in Table~\ref{2d_uncertain_data_set} and Fig.~\ref{ex:r-tree}, where instances $u_1^1$, $u_1^2$, and $u_1^3$ dominate the following instances: $u_3^1$, $u_3^2$, $u_3^3$, $u_4^1$, $u_4^2$, and $u_4^3$ by Definition~\ref{def_IlD}. In other words, by Definition~\ref{def_OlD}, object $u_1$ completely dominates objects $u_3$ and $u_4$, denoted as $u_1 \prec u_3$ and $u_1 \prec u_4$. The derivation of $dom(u_1)$ can be presented as
\begin{align*}
dom(u_1)=&\sum_{a=1}^{3}dom(u_1^{a})\\
=&[Pr(u_1^1)+Pr(u_1^2)+Pr(u_1^3)] \\
\times&[Pr(u_3^1)+Pr(u_3^2)+Pr(u_3^3)+Pr(u_4^1)\\
+&Pr(u_4^2)+Pr(u_4^3)]\\
=&Pr(u_1)\times [Pr(u_3)+Pr(u_4)]\\
=&1\times (1+1)=2.
\end{align*}
For object $u_2$, it completely dominates object $u_3$ and partially dominates object $u_4$, so the calculation of $dom(u_2)$ will be
\begin{align*}
dom(u_2)=&Pr[u_2\prec u_3]+Pr[u_2 \precsim u_4] \\
=&1+(Pr[u_2^1\prec u_4]+Pr[u_2^3\prec u_4]+Pr[u_2^2\precsim u_4]) \\
=&1+[Pr(u_2^1)+Pr(u_2^3)]\times Pr(u_4)\\
+&Pr(u_2^2)\times[Pr(u_4^2)+Pr(u_4^3)] \\
=&1+(0.6+0.3)\times 1+0.1\times(0.1+0.1) \\
=&1+0.9+0.02=1.92.
\end{align*}
Consequently, we can obtain the dominant scores of all the uncertain objects in the same way.

\subsection{System Architecture}
In this work, we construct an edge computing system as shown in Fig.\ref{ex:system_model} and make it support the parallel and distributed computing for monitoring the top-$k$ query over multiple uncertain IoT data streams. Such a way can improve the efficiency of computation. The system consists of a coordinator node (cloud service) $N_H$ and $m$ monitor nodes (edge computing nodes) $N_1,N_2, \dots,N_m$. Each monitor node $N_j$ can directly contact with the coordinator node $N_H$, where $1\leq j\leq m$. For $N_H$, all the reported information from each $N_j$ is recognized as an uncertain data steam $US_j$. Each $N_j$ needs to continuously compute the local result of the query and upload it to $N_H$ as the candidate result. $N_H$ needs to record all the unexpired candidates that are received from each $N_j$.

\subsection{Problem Statement}
\label{sec:def_ptdmus}
Our objective is to have a time-efficient approach determining the uncertain objects with top-$k$ dominant scores among all the uncertain objects in the considered system model. A global sliding window $SW_H$ and $m$ uncertain data streams $US_1, US_2, \ldots, US_m$ are given. Each $US_j$ is corresponding to the monitor node $N_j$ where $1\leq j\leq m$. Each $N_j$ has its local $SW_j$ and $|SW_H|=m*|SW_j|$. Each $N_j$ examines the objects in $SW_j$, saves the possible objects in a local candidate set, and then reports the local candidate set to the coordinator node $N_H$. $N_H$ uses the received local candidate sets to calculate the global candidate set and then broadcasts it to each $N_j$. Each $N_j$ uses the received global candidate set to derive the dominant scores of the objects that dominate others and then returns the scores to $N_H$. After that, $N_H$ integrates the received score information of each object and finds out $k$ data objects that have the highest scores. The final result set including \emph{Probabilistic Top-$k$ Dominating objects} is denoted as $PTD$. Note that the above process are repeatedly operated until there is no input data.

According to the above assumptions, there are three important issues to be solved:
\begin{enumerate}
	\item How to avoid the unnecessary computations for the dominant scores in order to save the computation time?
	\item How to minimize the number of local candidate objects for improving the transmission cost?
	\item How to reduce the frequency of dominant score derivations as time evolves?
\end{enumerate}

\begin{figure*}[t]
	\centering
	\includegraphics[width=0.7 \textwidth]{./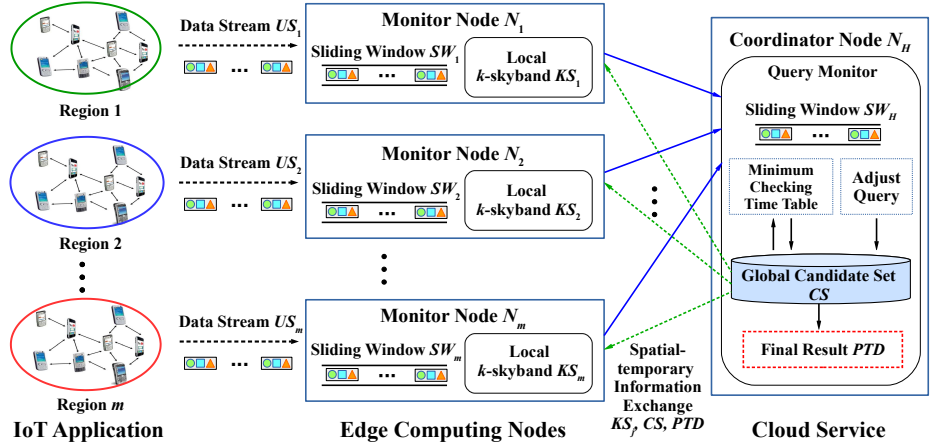}
	\caption{The architecture of considered edge computing system for query monitoring.}
	\label{ex:system_model}
	\vspace{-10pt}
\end{figure*}

\section{Probabilistic Top-\textit{k} Dominating Query Process over Multiple Uncertain Data Streams (PTDMUS)}
\label{sec:ptdmus}
In this section, we present the proposed approach, Probabilistic Top-\textit{k} Dominating Query Process over Multiple Uncertain Data Streams (PTDMUS). PTDMUS provides three mechanisms to solve the above three issues and we respectively introduce each of them in detail.

\subsection{The Computation with R-trees}
In the first part, we apply R-trees to the considered system for improving the computational speed of the dominant score derivation. Note that we use~\citep{GarciaR:1998:GAB:288692.288723} to generate bulk loading R-trees and minimize the overlaps of elements in each level, thereby optimizing the searching time.
By combining the characteristics of an MBR in Definition~\ref{def_OlD}, we can define the dominant relation between different MBRs as Definition~\ref{def_DMBR}.
\begin{definition}[\textbf{Dominance between different MBRs}]
	\label{def_DMBR}
	Given two different minimum bounded rectangles $MBR(u_1)$, $MBR(u_2)$ of object $u_1$ and $u_2$ respectively, if we consider $MBR(u_1)$ as a dominator, we can classify the relation between $MBR(u_1)$ and $MBR(u_2)$ as following cases:
	\begin{enumerate}
		\item Complete Dominance: $u_1^{max}$ of $MBR(u_1)$ is smaller than of $u_2^{min}$ of $MBR(u_2)$, denoted as $MBR(u_1)\prec MBR(u_2)$.
		\item Partial Dominance: $u_1^{min}$ of $MBR(u_1)$ is smaller than of $u_2^{max}$ of $MBR(u_2)$, denoted as $MBR(u_1)\precsim MBR(u_2)$.
		\item Missing Dominance: $u_1^{min}$ of $MBR(u_1)$ is larger than of $u_2^{max}$ of $MBR(u_2)$, denoted as $MBR(u_1)\nprec MBR(u_2)$.
	\end{enumerate}	
\end{definition}
Using the cases in Definition~\ref{def_DMBR}, the system can preclude irrelevant objects effectively and thus improve the computation overhead for dominant scores. For example, to derive the dominant score of $u_i$, the system uses MBR($u_i$) and the R-tree as inputs. The system will put all the children of the root in a \emph{Target Set} ($TS$) and then examine the relation between MBR($u_i$) and each the element $e_k$ in $TS$:
\begin{enumerate}
	\item Complete Dominance: if $e_k$ is an MBR node, add the number of objects in $e_k$ to $dom(u_i)$; otherwise, $e_k$ is an object and $dom(u_i)=dom(u_i)+1$.
	\item Partial Dominance: if $e_k$ is an MBR node, put all the children of $e_k$ in the \emph{Next Target Set} ($NTS$); otherwise, $e_k$ is an object and $dom(u_i)$ directly is added to the dominant score of $u_i$ with respect to $e_k$.
	\item Missing Dominance: do nothing.
\end{enumerate}	
After examining all the elements in $TS$, if $NTS$ is not empty, the system will clear $TS$ and insert all the elements of $NTS$ to $TS$. The system will do the above operations repeatedly until $TS$ is empty and can obtain the dominant scores of $u_i$ with respect to all the objects in the R-tree. The above computation process with R-tree will be executed on monitor nodes in PTDMUS.

\subsection{Threshold-based Probabilistic \textit{k}-skyband}
To reduce the number of candidate objects,~\citep{Mouratidis:2006:CMT:1142473.1142544} proposed a $k$-skyband approach for the top-$k$ query on certain data. In our work, we follow this idea to define a probabilistic $k$-skyband for minimizing the size of candidate set. First, we define the \emph{dominated score} of an uncertain object as follows.
\begin{definition}
	\label{def_POD}
	Given an uncertain data object $u_i$ in the sliding window $SW$, the score of an instance $u_i^{a}$ being dominated (or called dominated score) is defined as
	\begin{equation}
	r\text{-}dom(u_i^{a})=\sum_{u_j^{b}\prec u_i^{a},u_j\in SW-\{u_i\}}Pr(u_i^{a})\times Pr(u_j^{b}),\nonumber
	\end{equation}		
	where $u_j^{b}$ is an instance of $u_j$, and the dominated score of $u_i$ is	\begin{equation}
	r\text{-}dom(u_i)=\sum_{a=1}^n r\text{-}dom(u_i^{a}).\nonumber
	\end{equation}
\end{definition}
After obtaining the dominated score of each object, we can define the probabilistic $k$-skyband ($KS$) as follows.
\begin{definition}[\textbf{Probabilistic \textit{k}-skyband}]
	\label{def_PKOD}
	For a given integer $k$, the probabilistic $k$-skyband ($KS$) is a set of uncertain data objects and $KS=\{u \in U|dom(u)\geq 1\wedge r\text{-}dom(u) < k\}$
\end{definition}

Note that the top-$k$ dominating result is always a part of the $k$-skyband~\citep{Papadias:2005:PSC:1061318.1061320} in certain data.
For uncertain data, we can also have a similar property as shown in Theorem~\ref{def_TPKOD}
\begin{thm}
	\label{def_TPKOD}
	Given a set of probabilistic top-$k$ dominating objects $PTD$, $u\in PTD$, if $dom(u)\geq 1$, then $u\in KS$.
\end{thm}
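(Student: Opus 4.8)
The plan is to prove the theorem through a reduction followed by a contradiction argument. Since membership $u \in KS$ in Definition~\ref{def_PKOD} requires the two conditions $dom(u)\geq 1$ and $r\text{-}dom(u) < k$, and the first is already part of the hypothesis, the entire content of the theorem is the implication ``$u \in PTD$ and $dom(u)\geq 1$'' $\Rightarrow$ $r\text{-}dom(u) < k$. I would therefore fix such a $u$ and argue by contradiction: assume $r\text{-}dom(u)\geq k$, and derive that $u$ cannot be one of the $k$ objects with the largest dominant scores, contradicting $u \in PTD$.

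First I would rewrite both scores in object-level form. Using Definitions~\ref{def_dpo} and~\ref{def_POD} together with the dominance probability in Definition~\ref{def_OlD}, one checks that $r\text{-}dom(u)=\sum_{u_j\in SW-\{u\}}Pr[u_j\prec u]$ and, symmetrically, $dom(u)=\sum_{u_j\in SW-\{u\}}Pr[u\prec u_j]$. In words, $r\text{-}dom(u)$ is exactly the expected number of objects that dominate $u$, while $dom(u)$ is the expected number of objects dominated by $u$. This reformulation turns the assumption $r\text{-}dom(u)\geq k$ into the clean statement that, in expectation, at least $k$ objects dominate $u$.

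The core of the argument is transitivity. I would pass to the possible-worlds semantics, in which each object is realized as a single instance independently with its associated probability, so that every world is a certain data set and $dom(\cdot)$, $r\text{-}dom(\cdot)$ are the expectations of the corresponding per-world counts $dom_W(\cdot)$, $r\text{-}dom_W(\cdot)$. Inside a fixed world the instance-level relation of Definition~\ref{def_IlD} is a strict partial order, so if $u_j$ dominates $u$ then $u_j$ also dominates every object that $u$ dominates, together with $u$ itself; hence $dom_W(u_j)$ strictly exceeds $dom_W(u)$ for each such $u_j$. The remaining task is to aggregate these world-wise inequalities into a statement about the expected scores and to count, with the aim of exhibiting at least $k$ distinct objects $u_j$ satisfying $dom(u_j)>dom(u)$, which would place $u$ outside the top-$k$ and complete the proof.

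The hard part will be exactly this aggregation. The per-world transitivity is clean, but a single object $u_j$ may dominate $u$ in some worlds while being incomparable to, or even dominated by, $u$ in others, so the world-wise inequality $dom_W(u_j)>dom_W(u)$ need not survive taking expectations for an individual dominator. The delicate step is to convert the aggregate ``budget'' $r\text{-}dom(u)\geq k$ into $k$ \emph{separate} objects of strictly larger expected dominant score, rather than merely a total advantage of magnitude $k$ that is spread thinly across many objects none of which individually beats $u$. Controlling this interaction between the probability weighting and the counting is the step I expect to consume most of the effort, and it is where the certain-data skyband argument does not transfer verbatim.
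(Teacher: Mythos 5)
Your reduction and the first half of your plan coincide exactly with the paper's own argument: the paper proves the contrapositive, noting that $u\notin KS$ together with $dom(u)\geq 1$ forces $r\text{-}dom(u)\geq k$ by Definition~\ref{def_PKOD}, and then disposes of the rest in one sentence --- ``at least $k$ other objects dominate $u$ in average, it is hence impossible for $u$ to be one of the top-$k$ dominating objects.'' That sentence is precisely the aggregation step you flag as the hard part, asserted without justification by analogy with the certain-data skyband argument of~\citep{Papadias:2005:PSC:1061318.1061320}. So your proposal is not missing anything that the paper supplies; rather, you have isolated the exact point at which the published proof is only an appeal to certain-data intuition. Your proposal is nonetheless incomplete as a proof, since you leave that step open, and you should know that it cannot be closed as stated.

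The step is in fact false under the paper's Definitions~\ref{def_dpo}, \ref{def_POD} and \ref{def_PKOD}: an expected number of dominators $\geq k$ does not yield $k$ objects of strictly larger expected dominant score. Take $k=1$, $d=2$ (smaller is better), one window containing: $u$ effectively certain at $(10,10)$ (instances coincident up to negligible perturbation); three certain objects $w_1,w_2,w_3$ at mutually incomparable points dominated by $u$, e.g.\ $(11,14)$, $(12,13)$, $(13,12)$; and three objects $v_1,v_2,v_3$, each with two equiprobable instances, one at $(9,9)$ and one at $(2000,2000)$. Then $r\text{-}dom(u)=3\times 0.5=1.5\geq k$, so $u\notin KS$, yet $dom(u)=3+3\times 0.5=4.5$, while $dom(v_j)=0.5\times(1+3)+2\times 0.25=2.5$ and $dom(w_i)=3\times 0.5=1.5$; hence $u$ is the unique top-$1$ dominating object, i.e.\ $u\in PTD$ with $dom(u)\geq 1$ but $u\notin KS$. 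This is exactly the failure mode you predicted: each $v_j$ dominates $u$, and transitively inherits $u$'s dominated set, only in the half of the possible worlds where it realizes $(9,9)$, so the per-world inequality $dom_W(v_j)>dom_W(u)$ does not survive expectation, and the budget $r\text{-}dom(u)\geq k$ is spread thinly across dominators none of which individually beats $u$. No cleverer counting will repair this; a correct containment result needs a stronger notion of dominated score (e.g.\ a possible-worlds or threshold semantics under which $u$ is dominated by at least $k$ objects with sufficiently high probability in each world) rather than the expected count used in Definition~\ref{def_POD}.
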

\begin{proof}	
	If $u \notin KS$, then $r$-$dom(u)\geq k$ according to Definition~\ref{def_PKOD}. In this case, at least $k$ other objects dominate $u$ in average. It is hence impossible for $u$ to be one of the top-$k$ dominating objects and $u\notin PTD$. This contradicts to the given condition and the proof is done.
\end{proof}

In order to use the property in Theorem~\ref{def_TPKOD} for processing probabilistic top-$k$ dominating queries in parallel, we derive \emph{Threshold-based Probabilistic $k$-skyband} by giving a threshold $\delta \leq k$.
\begin{definition}[\textbf{Threshold-based Probabilistic \textit{k}-skyband}]
	\label{def_TKS}
	For a given integer $k$ and a threshold value $\delta \leq k$, the threshold-base probabilistic $k$-skyband (TKS) is a subset of $KS$ and $TKS=\{u\in U|dom(u)\geq 1 \wedge r\text{-}dom(u)< \delta \leq k\}$
\end{definition}

This method is used to solve the second issue we mentioned in the problem statement and reduce the size of the candidate set in each monitor node. In this way, the coordinator node can also process less received candidate objects that are possible to be the top-$k$ dominating objects. In conventional methods~\citep{Papadias:2005:PSC:1061318.1061320,Amagata2016}, the $k$-skyband is computed on both monitor and coordinator nodes. However, in the most modern big data applications, such a way is not efficient since the volume of candidates are usually still too large for the top-$k$ dominating objects considered by users. The coordinator node still needs too much computational cost on the $k$-skyband calculation and makes the response time intolerable to users. Hence, in the proposed PTDMUS approach the coordinator node uses a new mechanism, \emph{Minimum Checking Time} (MCT), to help efficiently derive the final result instead of computing the global threshold-based probabilistic $k$-skyband as the candidate set, $CS$. Note that $CS=\bigcup_{j=1}^{m}TKS_j$ and $TKS_j$ is the local result of threshold-based probabilistic $k$-skyband from the monitor node $N_j$. Such a mechanism can help the coordinator node process the objects that are really relevant to the query, decrease the frequency of score derivation on irrelevant objects, and significantly reduce the computational cost as well as improve the response time.

\subsection{Minimum Checking Time}
We discover an important phenomenon that each monitor node usually uploads the candidate set that is very similar to the previously uploaded one in most scenarios. In other words, the received local candidates from each monitor node do not often change dramatically as time moves. Therefore, we can record the statuses of candidate objects to make each monitor node only upload the objects that need to be updated. Such a way can alleviate the transmission cost mentioned in the second issue.

In this paper, we use a table, checking-time table ($CT$), to record the statuses of received data objects on the coordinator node $N_H$. The status of each received data object $u$ will be stored in one entry of $CT$ until the lifetime of $u$ is out. Note that the lifetime of a data object is equal to the length of $N_H$'s sliding window $SW_H$. The coordinator node thus only needs to update the status of data objects in $CT$ if necessary, and then calculates the final result. In general, most of the objects will not be in the final result. Hence, we can use a predictive way to determine the minimum checking time for the coordinator node. With $CT$ and the minimum checking time derivation, the server can only do the computation if the result will change. Such an idea comes from the conventional work~\citep{10.1007/978-3-319-11749-2_19} in centralized database systems. We thereby propose a new distributed version theorem for dynamically determining the minimum checking time to update the result set $PTD$ in distributed environments.
\begin{thm}[\textbf{Minimum Checking Time}]
	\label{thm_mct}
	Suppose that the notations are defined as above, the minimum checking time for the coordinator node represents the lower bound of expected time that the result set of probabilistic top-$k$ dominating objects $PTD$ will change, and it can be derived by
	\begin{equation}\label{eq:mct}
	mct(u)=\min(exp_{min},\lfloor\dfrac{dom_k-dom(u)}{mn} \rfloor+t_{cur}),
	\end{equation}
	where $exp_{min}$ is the nearest (smallest) expired time of an object in the set of $PTD$, $dom_k$ is the $k$-th highest dominant score of the objects in $PTD$, $t_{cur}$ is current time.
\end{thm}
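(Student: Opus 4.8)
The plan is to read $mct(u)$ as the earliest instant at which the contents of $PTD$ could possibly differ from their current state, and to show that no change can occur strictly before that instant. I would start from the observation that, between two recomputations on $N_H$, the set $PTD$ can be altered in only two ways: either an object currently in $PTD$ leaves the global sliding window (an expiration event), or a candidate object $u$ that is not presently among the top-$k$ (so $dom(u)<dom_k$) accumulates enough dominant score to overtake the current $k$-th best object. These two mechanisms correspond exactly to the two arguments of the $\min$ in Equation~(\ref{eq:mct}), so the theorem reduces to bounding, for each mechanism separately, the earliest time it can fire.

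The expiration side is immediate: by Definition~\ref{def_sw} every object carries a fixed lifetime equal to $|SW_H|$ time instances, so the first $PTD$ object to expire does so exactly at $exp_{min}$, and none can leave before then. The substantive part is the score side, where I would bound the growth rate of $dom(u)$. In one time instance each of the $m$ streams contributes at most one new object, and each object carries $n$ instances (Definition~\ref{def_udo}), so at most $mn$ new instances enter the global window per time unit. Since the weight $Pr(u^a)\,Pr(u_j^b)$ attached to each dominating pair is at most $1$, and summing over $u$'s own $n$ instances against a single new instance $u_j^b$ yields at most $Pr(u_j^b)\le 1$ (Definition~\ref{def_dpo}), each new instance raises $dom(u)$ by at most one unit; hence $dom(u)$ can grow by at most $mn$ per time instance. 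Consequently $u$ needs at least $(dom_k-dom(u))/(mn)$ time instances for $dom(u)$ to reach the threshold $dom_k$, and flooring (time is discrete) and adding $t_{cur}$ gives the earliest instant at which $u$ could enter the result.

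To combine the two bounds I would argue that for every $t$ with $t_{cur}\le t< mct(u)$ neither mechanism has yet fired: because $mct(u)\le exp_{min}$ no $PTD$ object has expired, and because $mct(u)\le \lfloor (dom_k-dom(u))/(mn)\rfloor + t_{cur}$ the object $u$ has not had time to reach $dom_k$. A short monotonicity remark makes the second point airtight: since no object expires on $[t_{cur},exp_{min})$, the surviving objects' scores are non-decreasing on that interval, so the threshold $dom_k$ that $u$ must catch cannot drop there; treating $dom_k$ as fixed is therefore the worst (earliest) case and keeps the bound valid. It follows that $PTD$ is unchanged on $[t_{cur},mct(u))$, so $mct(u)$ is a genuine lower bound on the time of the next change.

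I expect the rate bound to be the step requiring the most care. One must correctly identify $mn$ as the per-instant cap on incoming instances and justify that each contributes at most one unit of score, and one must track the direction of the inequality so that an \emph{upper} bound on the growth rate produces a \emph{lower} bound on the change time; using the looser cap $mn$ instead of the tighter $m$ only makes the estimate more conservative (an earlier check), which is harmless for a minimum checking time. The non-decreasing behaviour of $dom_k$ across the no-expiration interval is the second, easily overlooked detail that the argument needs for rigor.
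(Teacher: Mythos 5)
Your decomposition into the two firing mechanisms, and your handling of the expiration branch via $exp_{min}$, match the paper; but your Case-2 argument is the mirror image of the paper's, and the mirror has a genuine crack. The paper derives the $mn$ denominator in~\eqref{eq:mct} from \emph{departures}: each time slot expels $m$ of the oldest objects from the window, and each expelled object is charged with closing up to $n$ of the gap $dom_k-dom(u)$, because $dom_k$ \emph{falls} as objects dominated by the $k$-th ranked object expire. You instead derive $mn$ from \emph{arrivals} ($m$ new objects times $n$ instances raising $dom(u)$) and then dispose of the departure side with your ``monotonicity remark.'' That remark is false under the paper's model: $exp_{min}$ is the earliest expiry of an object \emph{in $PTD$}, not of any object in the window. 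By Definition~\ref{def_sw} the window is first-in-first-out and slides at every time instance, so on $[t_{cur},exp_{min})$ roughly $m$ old objects leave $SW_H$ per slot; whenever an expiring object was dominated by the current $k$-th object, $dom_k$ decreases. So ``no object expires on $[t_{cur},exp_{min})$, hence $dom_k$ cannot drop'' does not hold, and the step you yourself flagged as making the argument airtight is exactly where it breaks — the neglected mechanism is the very one the paper's own proof rests on.

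The damage is containable, but only by an accounting you did not perform. Since the instance probabilities of any object sum to $1$ (Definition~\ref{def_udo}), each expired object lowers any dominant score by at most $1$, so departures reduce $dom_k$ by at most $m$ per slot; symmetrically, your own (correct) observation that the tight arrival cap is $m$ rather than $mn$ bounds the growth of $dom(u)$ by $m$ per slot. The gap therefore closes at rate at most $2m$ per slot, which is within the theorem's $mn$ budget precisely when $n\geq 2$: the factor-$n$ slack that both you and the paper build into the rate is what absorbs the mechanism each side ignores, and for $n=1$ neither one-sided argument actually justifies~\eqref{eq:mct}. A repaired proof should bound both effects on the gap simultaneously and then observe $2m\leq mn$. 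One further scope remark applies equally to you and to the paper: $PTD$ can also change because a newly arriving object enters the top-$k$ outright; the per-candidate reading of $mct(u)$ (when must candidate $u$ be rechecked) implicitly excludes this, which is consistent with how Theorem~\ref{thm_mct} is used in the algorithm but deserves to be said explicitly.
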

\begin{proof}
	When $u\in CS$ and $u\notin PTD$, $u$ has a chance to be in $PTD$ if one of following two cases is satisfied:
	\begin{enumerate}
		\item some objects in $PTD$ are expired;
		\item $dom(u)\geq dom_k$.
	\end{enumerate}
	Our objective is to obtain the minimum time that $u$ can be in $PTD$. For Case 1, the process will search the minimum expired time of all objects in $PTD$ and it is depicted as $exp_{min}$. For Case 2, if object $u_k$ is the object with $k$-th highest dominant score in $PTD$ and object $u_{old}$ is going to be removed from $PTD$. Removing $u_{old}$ from $PTD$ will reduce the difference gap, $dom_k-dom(u)$, between $u$ and $u_k$. It means that some objects $u_{old}$ in $PTD$ result in $dom(u)\geq dom_k$. In general, there could be many old objects like $u_{old}$. The system needs to remove at least $\lfloor \dfrac{dom_k-dom(u)}{n}\rfloor$ old objects to remain the minimum number of necessary objects in $PTD$, and then $dom(u)\geq dom_k$ holds. Since each run of the computation can remove $m$ old objects like $u_{old}$, the minimum time period needs to be divided by $m$ and it will be $\lfloor \dfrac{dom_k-dom(u)}{mn}\rfloor$. After that, add the obtained minimum time period to the current time and thus get the predicted time, $\lfloor \dfrac{dom_k-dom(u)}{mn}\rfloor+t_{cur}$, that $dom(u)\geq dom_k$. Finally, the minimum checking time is updated by $\min(exp_{min},\lfloor\dfrac{dom_k-dom(u)}{mn} \rfloor+t_{cur})$ with respect to an object $u$, which is denoted as $mct(u)$.
\end{proof}
While computing $mct(u)$ of each object $u$, if the obtained time is equal to $t_{cur}$, it means that $u$ has a chance to become the result in this run but the dominant score of $u$ is not large enough to make $u\in PTD$. Then, if $u\in SW_H$ during the next run of computation, the system will record each $u$ in the checking-time table $CT$ and set $mct(u) = t_{cur}+1$ for the next run of computation. In summary, with the proposed theorem, the checking-time table $CT$ acts as a priority cache table and helps the $N_H$ effectively reduce the frequency of computation for the third issue.

\subsection{The Process of PDTMUS}
In fact, the overall process of PDTMUS has been briefly described in Section~\ref{sec:def_ptdmus}. In this subsection, we show the whole process using the pseudo-code in Algorithm~\ref{alg:PDTMUS}.
The system executes Algorithm~\ref{alg:PDTMUS} recursively until no input data coming in. At Line~\ref{alg:PDTMUS:line1} of Algorithm~\ref{alg:PDTMUS}, each monitor node $N_j$ inserts the received data objects in $US_j$ to the local sliding widow $SW_j$ at time $t$ and removes the oldest data objects in $SW_j$ due to the size limitation of a local sliding widow. At Line~\ref{alg:PDTMUS:line2}, each $N_j$ pre-processes all the local objects in $SW_j$ for constructing the local R-tree $R_{j}$ as well as obtaining the information of dominant relations between MBRs using Definition~\ref{def_DMBR}. Each $N_j$ then computes the local candidate (local $k$-skyband) set $CS_j^t$ with Definition~\ref{def_dpo}, Definition~\ref{def_POD}, and Definition~\ref{def_TKS}. Note that $CS_j^t=TKS_j^t$ and $TKS_j^t$ is the local result of threshold-based top-$k$ dominating objects from the monitor node $N_j$. If $t=0$ holds at Line~\ref{alg:PDTMUS:line4}, it means that the whole precess is in the initial phase and each $N_j$ at Line~\ref{alg:PDTMUS:line5} will upload the whole candidate set $CS_j^t$ to the coordinator node $N_H$; otherwise, each $N_j$ only needs to upload the necessary update information to $N_H$ at Line~\ref{alg:PDTMUS:line7}.
\begin{algorithm2e}[t]
	\small
	\SetAlgoLined
	\KwIn{$N_H$, $N_j$ ($1\leq j\leq m$), time-stamp $t$, threshold $\delta$, result limit $k$, sliding windows $SW_H$, $SW_j$, checking-time table $CT$}
	\KwOut{the set of probabilistic top-$k$ dominating data objects, $PTD^t$}
	every $N_j$ inserts (remove) objects into (from) $SW_j$\label{alg:PDTMUS:line1}\;
	every $N_j$ pre-processes the objects in $SW_j$ to generate the local R-tree $R_j$\label{alg:PDTMUS:line2}\;
	every $N_j$ derives $dom(u)$ and $r$-$dom(u),\forall u\in CS_j^t$ by Definition~\ref{def_POD}, uses $\delta$ to preclude irrelevant objects in $CS_j^t$ by Definition~\ref{def_TKS}, and then obtains the local candidate set $CS_j^t=TKS_j^t$\label{alg:PDTMUS:line3}\;
	\uIf{$t==0$}{\label{alg:PDTMUS:line4}
		every $N_j$ uploads $CS_j^t$ to $N_H$\label{alg:PDTMUS:line5}\;
	}
	\Else{
		every $N_j$ uploads update information to $N_H$\label{alg:PDTMUS:line7}\;
	}
	$N_H$ computes the global candidate set $CS^t$ from $SW_H$\label{alg:PDTMUS:line9}\;
	$N_H$ broadcasts $CS^t$ to every $N_j$\label{alg:PDTMUS:line10}\;
	every $N_j$ derives $dom(u)$ and $r$-$dom(u),\forall u\in CS^t$ by Definition~\ref{def_POD} and uses $\delta$ to preclude irrelevant objects in $CS^t$ by Definition~\ref{def_TKS}\label{alg:PDTMUS:line11}\;
	every $N_j$ uploads the updated $CS^t$ to $N_H$ \label{alg:PDTMUS:line12}\;
	$N_H$ sums up the received scores of objects being dominated and update $CS^t$\label{alg:PDTMUS:line13}\;
	$N_H$ finds $PTD^t$ from $CS^t$\label{alg:PDTMUS:line14}\;
	$N_H$ uses~\eqref{eq:mct} to update the minimum checking time of each object in $CS^t-PTD^t$ and saves the information in checking time table $CT^t$\label{alg:PDTMUS:line15}\;
	$N_H$ broadcast $CT^t$ to every $N_j$\label{alg:PDTMUS:line16}\;
	\Return $PTD^t$\label{alg:PDTMUS:end}\;
	\caption{The main process of PDTMUS}
	\label{alg:PDTMUS}
\end{algorithm2e}

After $N_H$ receives each local candidate set from each $N_j$ in $SW_H$, $N_H$ derives the global candidate set $CS^t$ in the same way (using Definition~\ref{def_POD} and Definition~\ref{def_PKOD}) at Line~\ref{alg:PDTMUS:line9}.
The coordinator node $N_H$ then broadcasts the global candidate set to every $N_j$ at Line~\ref{alg:PDTMUS:line10} and asks $N_j$ for helping the local computation. Each $N_j$ derives the dominant and dominated scores, $dom(u)$ and $r$-$dom(u)$, of all the objects in $CS_j^t$ and uploads the updated $CS_j^t$ to $N_H$ using Definition~\ref{def_TKS} at Lines~\ref{alg:PDTMUS:line11} and~\ref{alg:PDTMUS:line12}. From Lines~\ref{alg:PDTMUS:line13} to \ref{alg:PDTMUS:line15}, $N_H$ uses the received information of dominated scores to update $CS^t$, finds the final global result $PTD^t$ for time $t$, and then updates the minimum checking times of the objects that may be the answer at time $t+exp_{min}$. $N_H$ broadcasts the information of checking time in $CT^t$ to every $N_j$ at Line~\ref{alg:PDTMUS:line16} using Theorem~\ref{thm_mct}. With the checking time table $CT^t$, each $N_j$ can determine the appropriate time of the next round of update/derivation and thus effectively reduce the frequency of computation. Such a way can avoid a lot of unnecessary computation. In the last, the system returns $PTD^t$ as the final result to the user.

\section{Complexity Analysis}
\label{sec:analysis}
After introducing the proposed process of PTDMUS, we analyze and discuss its time complexity, space complexity, and transmission cost in both the average case and the worst case, respectively.

\subsection{Time Complexity}
In the first run (time slot $t=0$) of the PTDMUS process, mentioned in the previous section, each monitor node $N_j$ takes time on constructing a local R-tree, $R_j$, with all the data objects in $SW_j$ at the initial step, deriving $dom(u)$ and $r$-$dom(u)$ of each $u$ in $R_j$ at the second step, and extracting the threshold-based top-$k$ dominating objects into $TKS_j$ at the last step. Hence, the time complexity of PTDMUS on a monitor node $N_j$ can be expressed as
\begin{align}\label{eq:monitor:time:previous}
\text{T}_{\text{average}}(N_j,t=0)=&\text{T}_{\text{construction}}(R_j)+\text{T}_{\text{update}}(R_j)\nonumber\\
+&\text{T}_{\text{extract}}(TKS_j).
\end{align}
In PTDMUS, the time complexity is related to maintaining and searching the R-trees. According to~\citep{ALBORZI20076}~\citep{Arge:2008:PRP:1328911.1328920}, the time for constructing a $d$-dimensional R-tree is $O(\dfrac{|U|}{B}\log_{R_{degree}/B}\dfrac{|U|}{B})$ where $B$ is the block (or page) size of data on the disk (or memory), $R_{degree}$ is the degree fanout of R-tree. In this work, we deal with the uncertain data objects in a object-oriented model ($B=1$), so the time for local R-tree's construction will be
\begin{align}\label{eq:monitor:time:local_construction}
\text{T}_{\text{construction}}(R_j)=|SW_j|\log_{R_{degree}}|SW_j|.
\end{align}

In the considered environment, we assume
that the data points are uniformly and independently distributed in the domain space $[0, 2000]^d$. To make it simple to analysis, we normalize the space into $[0,1]^d$.
According to~\citep{Theodoridis:1996:MPR:237661.237705}, $R_j$'s height $h_j$ and the number of nodes $N_L$ at level $L$ (let the leaf level be $0$) will be approximately $h_j = 1 + \lceil\log_{R_{degree}} (|SW_j|/R_{degree})\rceil$ and $N_L=|SW_j|/(R_{degree})^{L+1}$, respectively. Besides, the extent $\theta_{L}$ (i.e., length of any 1D projection) of a node at the $L$-th level can be estimated by $\theta_{L}=(1/N_L)^{1/d}$ and some nodes in the $L$-th level may be partially dominated by $u$. Fig.~\ref{fig:upper_bound:dom} shows that the gray region $I_2$ corresponds to the maximal region, covering nodes (at level $L$) that are partially dominated by $u^{\min}$.
Then, the average number of required node accesses in the R-tree for computing the dominant score $dom(u)$ of object $u$ will be~\citep{Yiu:2007:EPT:1325851.1325908}
\begin{align*}
\text{T}_{dom(u)}(SW_j)=&\sum_{L=0}^{h_j-1} N_L n^2\times\\
 &[(1-v_{u^{\min}}+\theta_{L})^d-(1-v_{u^{\min}}-\theta_{L})^d],
\end{align*}
where $v_{u^{\min}}$ is the value of $u^{\min}$ after the 1D projection and $n$ is the number of instances in an object. Hence, the time complexity of dominance update on the monitor node can be expressed as
\begin{align}\label{eq:monitor:time:dom}
\text{T}_{dom}(N_j,t=0)=|SW_j|\times\text{T}_{dom(u)}(SW_j).
\end{align}
To obtain the local threshold-based probabilistic $k$-skyband, the monitor node $N_j$ also needs to traverse the $R_j$ to derive the $r$-$dom(u)$ of each object $u$ in the $SW_j$. Fig.~\ref{fig:upper_bound:rdom} shows that the gray region $I_2'$ corresponds to the maximal region, covering nodes (at level $L$) that partially dominate $u^{\max}$. The average number of required node accesses in the R-tree for computing the $r$-$dom(u)$ of object $u$ will be
\begin{align*}
\text{T}_{r\text{-}dom(u)}(SW_j)=&\sum_{L=0}^{h_j-1} N_L n^2\times\\
 &[(1-v_{u^{\max}}-\theta_{L})^d-(1-v_{u^{\max}}-2\theta_{L})^d],
\end{align*}
where $v_{u^{\max}}$ is the value of $u^{\max}$ after the 1D projection. Hence, the time complexity of $k$-skyband update on a monitor node can be derived by
\begin{align}\label{eq:monitor:time:rdom}
\text{T}_{r\text{-}dom}(N_j,t=0)=|SW_j|\times\text{T}_{r\text{-}dom(u)}(SW_j).
\end{align}

\begin{figure}[ht]
	\vspace{-15pt}
	\centering
	\subfigure[Computing $dom(u)$]{
		\label{fig:upper_bound:dom} 
		\includegraphics[width=0.23 \textwidth]{./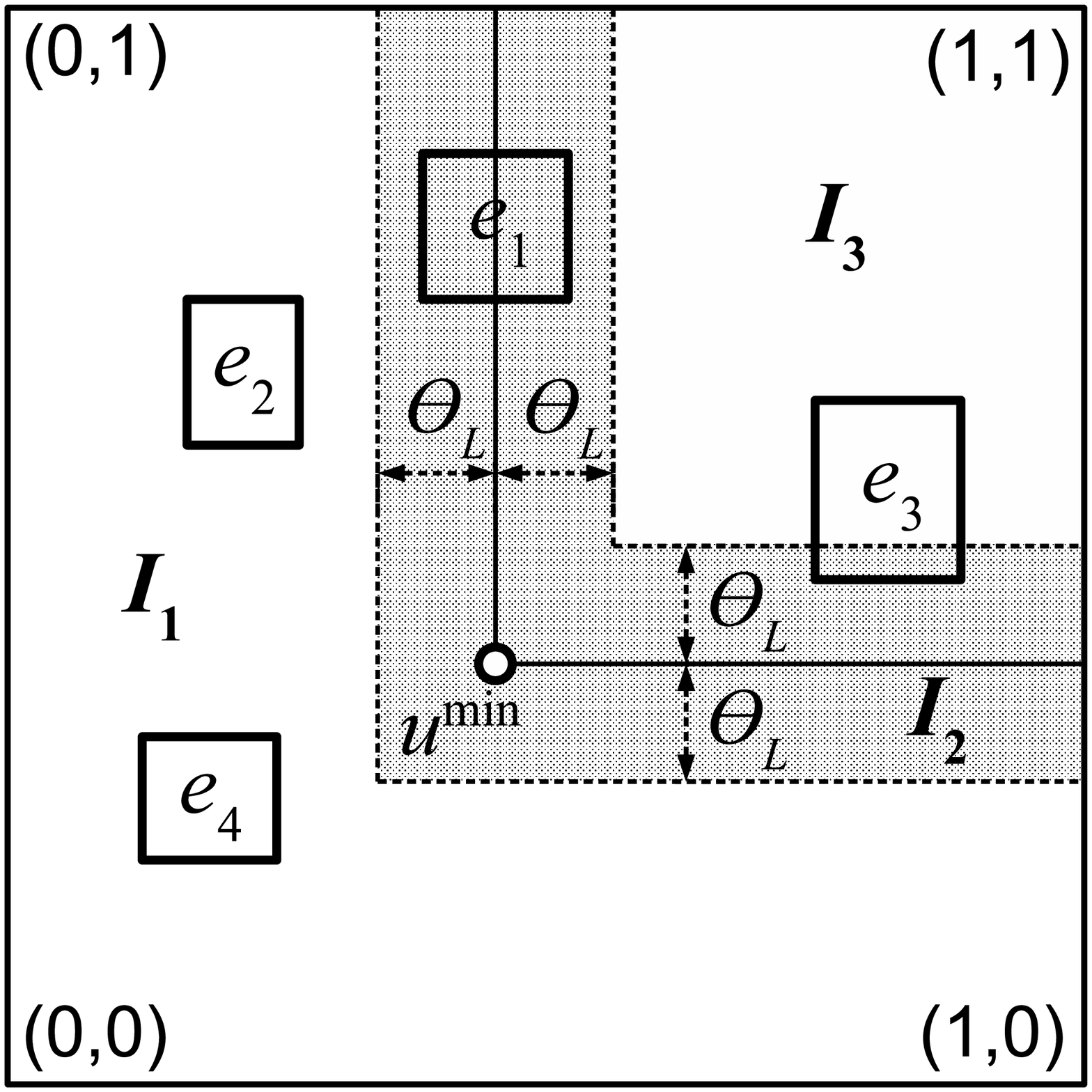}}
	\subfigure[Computing $r$-$dom(u)$]{
		\label{fig:upper_bound:rdom} 
		\includegraphics[width=0.23 \textwidth]{./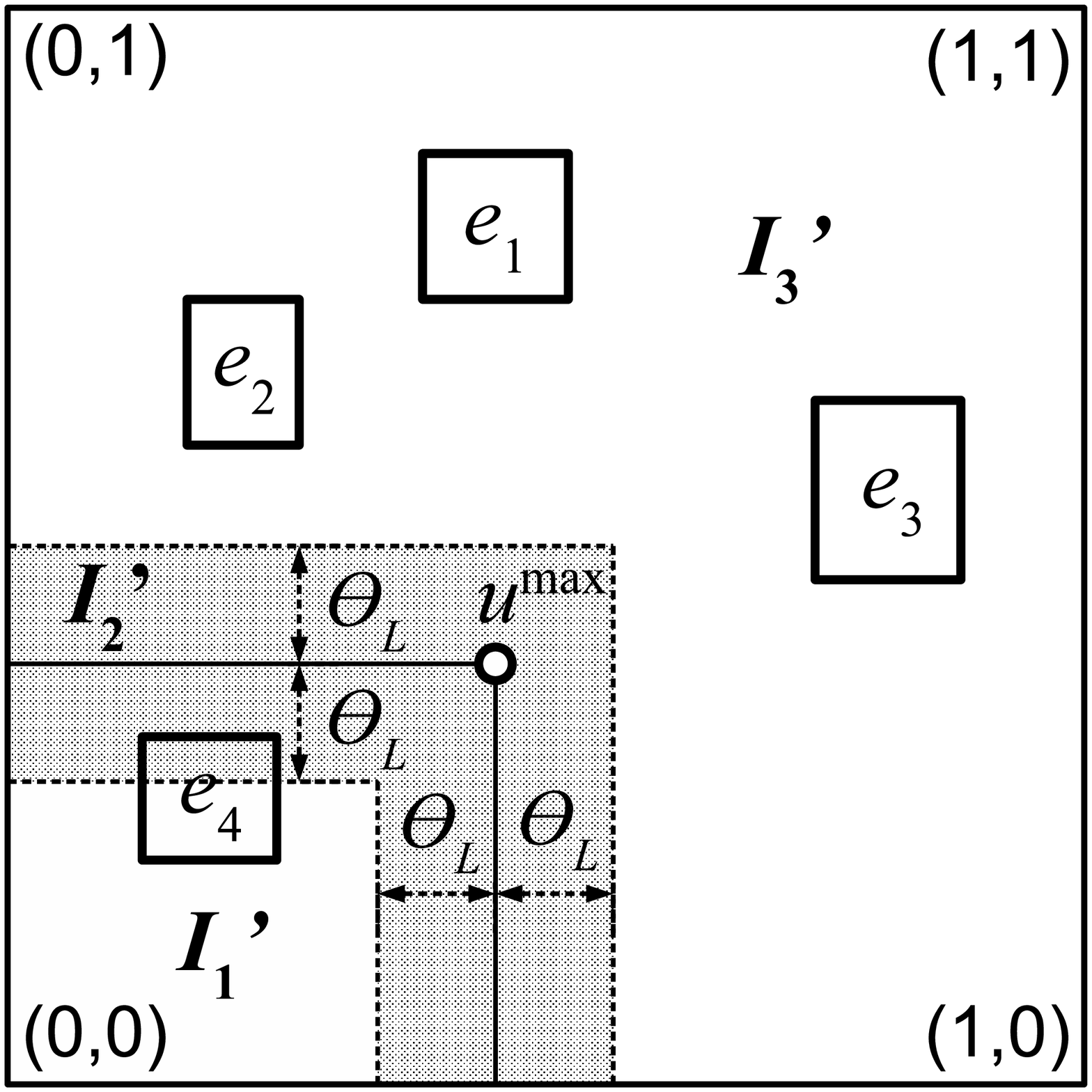}}
	\caption{The approximated upper bounds of computational costs on \subref{fig:upper_bound:dom} Computing $dom(u)$ , \subref{fig:upper_bound:rdom} Computing $r$-$dom(u)$.
	}
	\label{fig:upper_bound} 
\end{figure}

That is, with~\eqref{eq:monitor:time:dom} and~\eqref{eq:monitor:time:rdom}, the time complexity of the second step on $N_j$ will be
{\small
\begin{align}\label{eq:monitor:time:update}
\text{T}_{\text{update}}(R_j)=&\text{T}_{dom}(N_j,t=0)+\text{T}_{r\text{-}dom}(N_j,t=0)\nonumber\\
=&|SW_j|\times(\text{T}_{dom(u)}(SW_j)+\text{T}_{r\text{-}dom(u)}(SW_j)),
\end{align}}%
where $\forall u\in SW_j$.
In the last step, $N_j$ will copy all the objects in $SW_j$ to a temporary candidate list $CS_j'$, sort the objects in decreasing order by $dom(u)$ using merge-sort where $\forall u\in CS_j'$, and then use the threshold $\delta$ to extract the local probabilistic $k$-skyband. Therefore, the time complexity of $\text{T}_{\text{extract}}(TKS_j)$ can be denoted as
\begin{align}
\text{T}_{\text{extract}}(TKS_j)=2\times|SW_j|+|SW_j|\log_{2}|SW_j|.\nonumber
\end{align}
In summary, with~\eqref{eq:monitor:time:local_construction} to~\eqref{eq:monitor:time:update}, we can express~\eqref{eq:monitor:time:previous} as

{\small\vspace{-10pt}
\begin{align}\label{eq:monitor:time:final}
\text{T}_{\text{average}}(N_j,t=0)=&|SW_j|\times(\log_{R_{degree}} |SW_j|+\text{T}_{dom(u)}(SW_j)\nonumber\\
+&\text{T}_{r\text{-}dom(u)}(SW_j)+2+\log_2 |SW_j|),
\end{align}}%
where $\forall u\in SW_j$.

Note that PTDMUS needs to monitor the result of the top-$k$ dominating query continuously in a monitoring time period $\varDelta t$ and $\varDelta t$ is set by

{\small\vspace{-10pt}
\begin{align}\label{eq:monitor_time}
\varDelta t=&\begin{cases}
\max_{1\leq j\leq m}\{\dfrac{|U|}{m}-|SW_j|\}, & \text{if $\dfrac{|U|}{m}-|SW_j|>0$}.\\
1, & \text{otherwise}.
\end{cases}
\end{align}}%
After the first run (time slot), the coordinator node $N_H$ will broadcast the global candidate set $CS^t$ at time $t$ with the minimum checking time, $exp_{min}$, to each monitor node $N_j$. Each $N_j$ can use the received information to reduce the computational overhead during the next computation of local result when $t>0$. In the following time slots, $N_j$ uses the candidate set of previous run, $CS^{t-exp_{min}}$, to construct the global R-tree, $R_H^t$, for dominance checks instead of using $R_j$. In practice, we use two temporary lists, $DO_j^t$ and $NO_j^t$, to help the update of candidate set during the time period $(t-exp_{min},t]$. $DO_j^t$ is used to record the objects that are going to be deleted where $DO_j^t=\{SW_j^{t-exp_{min}}[0],SW_j^{t-exp_{min}}[1],\dots, SW_j^{t-exp_{min}}[exp_{min}-1]\}$. $NO_j^t$ is used to stored the new input objects that are going to be added where $NO_j^t=\{SW_j^t[|SW_j|-1],SW_j^t[|SW_j|-2],\dots,SW_j^t[|SW_j|-exp_{min}]\}$. Thus, the exact data set $UO_j^t$ that needs to be processed at time $t$ becomes $CS^{t-exp_{min}}\cup NO_j^t-DO_j^t$ and $N_j$ uses $UO_j^t$ to construct the new local $R_j^t$ for computing $TKS_j^t$. Using $UO_j^t$ to substitute $SW_j$ with~\eqref{eq:monitor:time:local_construction} to~\eqref{eq:monitor:time:final}, the time complexity of a derivation run on $N_j$ at time $t$ can be obtained by

{\small\vspace{-10pt}
\begin{align}\label{eq:monitor:time:t_larger_0:previous}
\text{T}_{\text{average}}(N_j,t>0)=&|UO_j^t|\times(\log_{R_{degree}} |UO_j^t|+\text{T}_{dom(u)}(UO_j^t)\nonumber\\
+&\text{T}_{r\text{-}dom(u)}(UO_j^t)+2+\log_2 |UO_j^t|),
\end{align}}%
where $\forall u\in UO_j^t$.
In summary, the average complexity during the time $\varDelta t$ will be

{\small\vspace{-10pt}
\begin{align}\label{eq:monitor:average:time}
\text{T}_{\text{average}}(N_j)=&\dfrac{1}{\varDelta t}(\text{T}_{\text{average}}(N_j,t=0)\nonumber\\
+&\sum_{h=1}^{\lfloor\varDelta t/\overline{exp_{min}}\rfloor}\text{T}_{\text{average}}(N_j,t=h\times\overline{exp_{min}})).
\end{align}}%

In fact, the derived costs $\text{T}_{\text{average}}(N_j,t=0)$ in PTDMUS and PTDSky methods are similar since both of them use monitor nodes to derive the local $k$-skybands. From~\eqref{eq:monitor:average:time}, we can know that the computation time is significantly influenced by the computation cost of each run when $t>0$. PTDMUS uses the minimum checking time $exp_{min}$ to reduce the frequency $f$ of derivations (or dominance checks) where $f=1+\lfloor\varDelta t/\overline{exp_{min}}\rfloor$. If $\overline{exp_{min}}=1$, each monitor node in PTDMUS and PTDSky will have similar computation time $\text{T}_{\text{average}}(N_j)$. The worst case only occurs when $\overline{exp_{min}}=1$ and $N_j$ always receives the global candidate set $CS^t=SW_H$. In such a scenario, the set $UO_j^t$' needed to be process at each time slot $t$ will be $UO_j^t$'$=SW_H^{t-exp_{min}}\cup NO_j^t-DO_j^t$ and $|UO_j^t$'$|$ will become very large. To obtain the upper bound of the time complexity, $\text{T}_{\text{worst}}(N_j)$, on a monitor $N_j$, we can use $\overline{exp_{min}}=1$ and substitute $UO_j^t$' for $UO_j^t$ in~\eqref{eq:monitor:time:t_larger_0:previous} and~\eqref{eq:monitor:average:time}.

After analyzing the time complexity on a monitor node, time complexity on the coordinator node, $\text{T}_{\text{average}}(N_H)$, also needs to be discussed. However, $\text{T}_{\text{average}}(N_H)$ depends on the size of global candidate set $CS$, so we will discuss $\text{T}_{\text{average}}(N_H)$ after analyzing the space complexity on $N_H$ in the next subsection.

\subsection{Space Complexity}
In the considered parallel computing model, the size of global candidate set $|CS|$ in the coordinator node $N_H$ depends on the size of received local threshold-based probabilistic $k$-skyband $|TKS_j|$ from each monitor node $N_j$ and the number of monitor nodes $m$.
Suppose that $\text{PKsky}(u)$ is an indicator function defined as
\begin{align*}
\text{PKsky}(u)=&\begin{cases}
1, & \text{if $dom(u)\geq 1\wedge r\text{-}dom(u)<\delta$}.\\
0, & \text{otherwise}.
\end{cases}
\end{align*}
In most application scenarios of big data, the size of local result, $|TKS_j|$, is usually larger than $k$. Thus, the average size of global candidate set $\overline{|CS|}=\text{SP}_{\text{average}}(CS)$ will be
\begin{align}\label{eq:space:average}
\text{SP}_{\text{average}}(CS)=&\sum_{j=1}^m \sum_{l=1}^{|SW_j|} \text{PKsky}(u)\nonumber\\
=&\sum_{j=1}^m |TKS_j|=m\times \overline{|TKS|},
\end{align}
where $\overline{|TKS|}$ is the average size of the received local threshold-based probabilistic $k$-skybands from the monitor nodes. Note that both $|CS|$ and $|TKS_j|$ are usually much larger than $k$ in most big data applications. In general, $|TKS_j|$ is much smaller than $|SW_j|$ due to the dominance and object pruning by the threshold.

Consider the worst case, the space complexity of candidate set $CS$ in the monitor node $N_H$ can be denoted as
\begin{align}\label{eq:space:worst}
\text{SP}_{\text{worst}}(CS)=&\sum_{j=1}^m |SW_j|=m\times \overline{|SW|}=|SW_H|,
\end{align}
where $\overline{|SW|}$ is the average size of the sliding windows in monitor nodes. The worst case only happens when all the uncertain data objects are anti-correlated in all dimensions. It means that the condition $\forall u\in U, dom(u)=0\wedge r$-$dom(u)=0$ holds and makes all the data objects in monitor nodes to be uploaded to the coordinator node. Thus, the space complexity of the worst case in the monitor node $N_H$ is $O(|SW_H|)$. However, such a case is almost impossible to occur in big data environments.

After discussing the average and the worst space complexities on the coordinator node $N_H$ respectively, we can start discussing the time complexity of the computation on $N_H$. In PTDMUS, instead of computing the global $k$-skyband, $N_H$ just uses merge-sort to sort the received data objects from the monitor nodes by $dom(u)$ in a decreasing order, derives the expected checking time of $u$, and finds the minimum checking time $exp_{min}$ at each run (time $t$), where $\forall u\in CS^t$. Hence, the average time complexity for one run on $N_H$, $\text{T}_{\text{average}}(N_H,t\geq 0)$, can be formulated as
\begin{align*}
\text{T}_{\text{average}}(N_H,t\geq 0)=&|CS^t|\log_2 |CS^t|+|CS^t|.
\end{align*}
Due to the usage of the minimum checking time, the expected average time complexity can be derived by
\begin{align*}
\text{T}_{\text{average}}(N_H)=&\dfrac{1}{\varDelta t}\sum_{h=0}^{\lfloor\varDelta t/\overline{exp_{min}}\rfloor}\text{T}_{\text{average}}(N_H,t=h\times\overline{exp_{min}})),\\
=&\dfrac{f}{\varDelta t}\times(\overline{|CS|}\log_2\overline{|CS|}+\overline{|CS|}),
\end{align*}
where $\overline{|CS|}=\text{SP}_{\text{average}}(CS)$ in~\eqref{eq:space:average} and $f=1+\lfloor\varDelta t/\overline{exp_{min}}\rfloor$.
Additionally, the worst case occurs when $\overline{exp_{min}}=1$ (or $f=\varDelta t$) and \eqref{eq:space:worst} holds. Then the worst time complexity can be obtained by
\begin{align*}
\text{T}_{\text{worst}}(N_H)=&|SW_H|\log_2|SW_H|+|SW_H|.
\end{align*}

\subsection{Transmission Cost}
In general, the transmission cost depends on the sizes of local probabilistic $k$-skybands and the global candidate set. According to the process of PTDMUS in Algorithm~\ref{alg:PDTMUS}, the average transmission cost of a monitor node can be expressed as
\begin{align}
Cost_{\text{average}}=&\dfrac{1}{\varDelta t}\times(Cost_{\text{initial}}+Cost_{\text{update}}),\label{eq:avg_cost:1}\\
Cost_{\text{initial}}=&\overline{|TKS^{t=0}|}+2\times\overline{|CS^{t=0}|}+\overline{|CT^{t=0}|},\text{ and}\label{eq:avg_cost:2}\\
Cost_{\text{update}}=&\sum_{h=1}^{\lfloor\varDelta t/\overline{exp_{min}}\rfloor}(\overline{|Info_{\text{update}}|}\nonumber\\
+&2\times\overline{|CS^{t=h\times\overline{exp_{min}}}|}+\overline{|CT^{t=h\times\overline{exp_{min}}}|}),\label{eq:avg_cost:3}
\end{align}%
where $TKS^{t=0}$, $CS^{t=0}$, and $CT^{t=0}$ are respectively the local threshold-based $k$-skyband, candidate set, and checking time table at the initial step (the fist time slot), as well as $Info_{\text{update}}$ is the minimum set of candidate objects needed to be updated at the $h\times\overline{exp_{min}}$ time slot. Note that $Info_{\text{update}}$ is expressed as

{\small\vspace{-10pt}
\begin{align}\label{eq:avg_cost:4}
Info_{\text{update}}=&\bigcup{}\{u\in TKS^{t=h\times\overline{exp_{min}}}|u'\in TKS^{t=(h-1)\times\overline{exp_{min}}}\wedge \nonumber\\ &u.ID=u'.ID\wedge(dom(u)!=dom(u')\vee \nonumber\\ &r\text{-}dom(u)!=r\text{-}dom(u'))\}.
\end{align}}%
In general, $|Info_{\text{update}}|$ is much smaller than $|TKS|$ and $|CS|$. With~\eqref{eq:avg_cost:3}, PTDMUS only needs to upload the update information $f=1+\lfloor\varDelta t/\overline{exp_{min}}\rfloor$ times during the monitor time $\varDelta t$. By contrast, PTDSky needs to upload information at every time slot of $\varDelta t$. In summary, Equations~\eqref{eq:avg_cost:1} to~\eqref{eq:avg_cost:4} are used to measure the average transmission cost of PTDMUS in the simulations.

The worst case of update cost only occurs when each input data object in the consequence time slots always becomes the top-$1$ dominating object. In this case, $Info_{\text{update}}$ will become $TKS^t$ and the monitor node always needs to upload $TKS^t$ at every time slot. In addition, the worst transmission cost on the information exchange between $N_H$ and $N_j$ occurs when $|CS^t|=|SW_H|=|CT^t|$. Hence, the worst transmission cost (or network load) of a monitor node will be
\begin{align*}
Cost_{\text{worst}}=&\dfrac{1}{\varDelta t}\times(\sum_{t=0}^{\varDelta t-1}(|TKS^t|+3\times|SW_H|)).
\end{align*}

\section{Simulation Results}
\label{sec:simulation}
The simulation including all compared approaches are implemented in JAVA with Spark using Eclipse IDE and the developed program is platform-independent. The simulation program is executed on a Windows 10 server with an Intel(R) Core(TM) i7-3770 CPU @ 3.40GHz - 3.80GHz and 8GB $\times$ 2 memory. In this simulation, we use synthetic data and the number of uncertain data objects is 10,000. We perform three different approaches for comparisons:
\begin{itemize}
	\item PTDMUS performs with R-trees, threshold-based probabilistic $k$-skyband in the monitor nodes, and PTDMUS performs with R-trees and the minimum checking time in the coordinator node;
	\item PTDSky executes with R-trees and threshold-based probabilistic $k$-skyband in both monitor and coordinator nodes~\citep{Mouratidis:2006:CMT:1142473.1142544};
	\item PTDBF only runs with R-trees in a centralized way without any parallelism.
\end{itemize}

Since PTDBF is performed in a centralized server with the global information including all input data streams, PTDBF can always has the correct result of a top-$k$ dominating query. Hence, PTDBF is treated as the baseline method in the simulation.
The performance of the above compared approaches is measured in terms of the \emph{computation time}, \emph{transmission cost}, \emph{precision}, and \emph{recall}, while considering the effects of threshold $\delta$, data dimensionality, the number of monitor nodes, the size of sliding window, the value of $k$, and the margin of uncertainty. In the previous section, both computation time and transmission cost have been detailedly analyzed in the average case and the worst case. The correctness/reliability of the proposed method is also important and thereby we validate the above methods in the simulation in terms of precision and recall.
Suppose that $PTD^t_{\text{baseline}}$ is the result set of top-$k$ dominating objects obtained from PTDBF at time $t$ and $PTD^t_{\text{compared}}$ is the one obtained from PTDMUS or PTDSky at time $t$, the precision and recall can be obtained by

{\small\vspace{-10pt}
\begin{align*}
\text{Precision}=&\dfrac{1}{\varDelta t}\times(\sum_{t=0}^{\varDelta t-1}(\dfrac{|PTD^t_{\text{baseline}}\cap PTD^t_{\text{compared}})|}{|PTD^t_{\text{compared}}|})\times 100\%,\\
\text{Recall}=&\dfrac{1}{\varDelta t}\times(\sum_{t=0}^{\varDelta t-1}(\dfrac{|PTD^t_{\text{baseline}}\cap PTD^t_{\text{compared}})|}{|PTD^t_{\text{baseline}}|})\times 100\%.
\end{align*}}%

We perform the simulations in 20 different scenarios and each scenario is executed $\varDelta t$ runs (time slots) to get the average results and $\varDelta t$ is set by~\eqref{eq:monitor_time}. The detailed setting of parameters is presented in Table~\ref{simulation_parameters}.
\begin{table}[ht]
	\caption{Simulation Parameters}
	\label{simulation_parameters}
	\centering
	\begin{tabular}{p{3.7cm}cc}
		\hline
		\textbf{Parameter} & \textbf{Default Value} & \textbf{Range (type)}\\
		\hline
		Number of data objects, $|U|$ & $10000$ & - \\
		Number of instances, $n$ & $5$ & - \\
		Dimension, $d$ & $9$ & $3,5,7,9$\\		
		Space of an attribute & $[0, 2000]$ & - \\
		Number of monitor nodes, $m$ & $10$ & $4,6,8,10$\\
		Size of a local sliding window, $|SW_j|$  & $960$ & $240,480,720,960$ \\
		Size of the global sliding window, $|SW_H|$ & $9600$ & $m\times |SW_j|$\\
		Degree of R-tree, $R_{degree}$ & $6$ & - \\
		Threshold, $\delta$ & $30$ & $10, 20, 30, 40, 50$\\
		Margin of Uncertainty, $M$ & $160$ & $80, 160, 240, 320$ \\
		Distribution & Uniform & -\\
		$k$ & $100$ & $50,100,150,200$\\
		\hline
	\end{tabular}
\end{table}

\begin{figure*}[ht]
	\centering
	\subfigure[Computation Time]{
		\label{fig:threshold:time} 
		\includegraphics[width=0.24 \textwidth]{./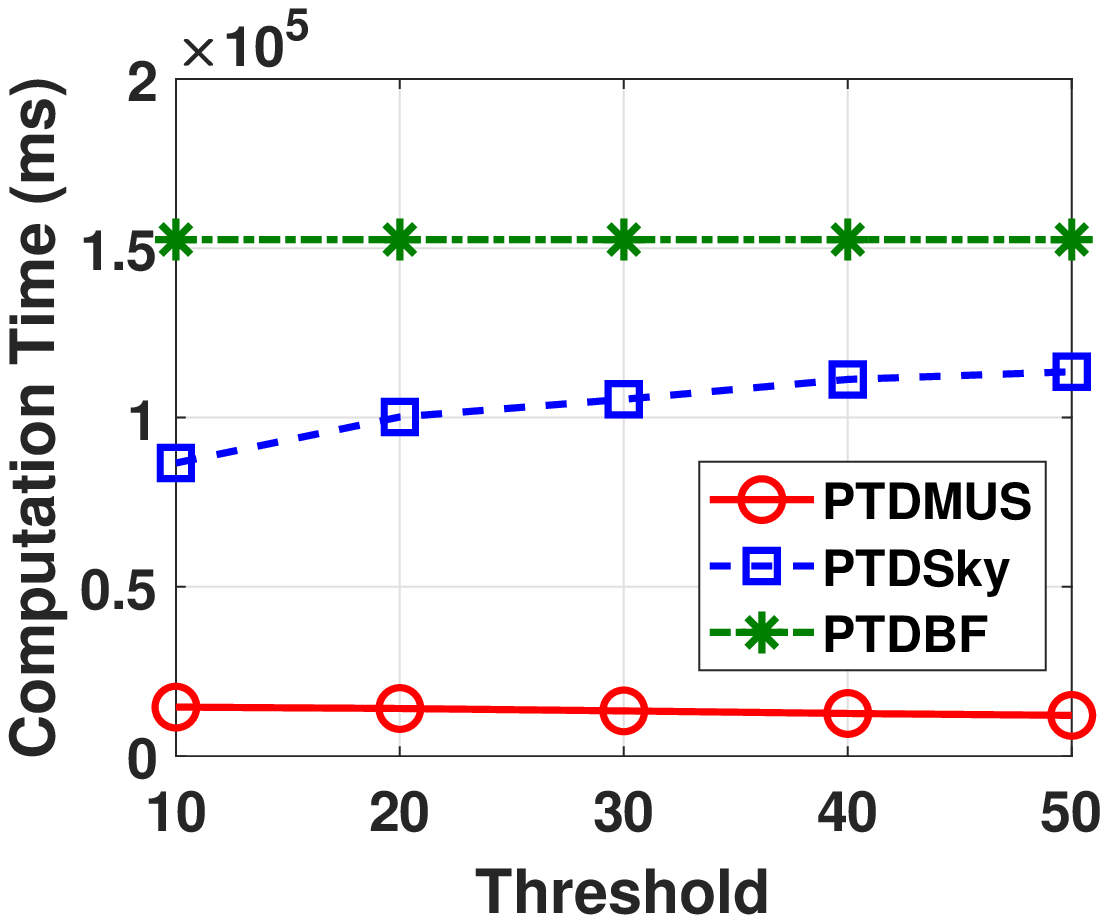}}
	\subfigure[Transmission Cost]{
		\label{fig:threshold:transmission} 
		\includegraphics[width=0.24 \textwidth]{./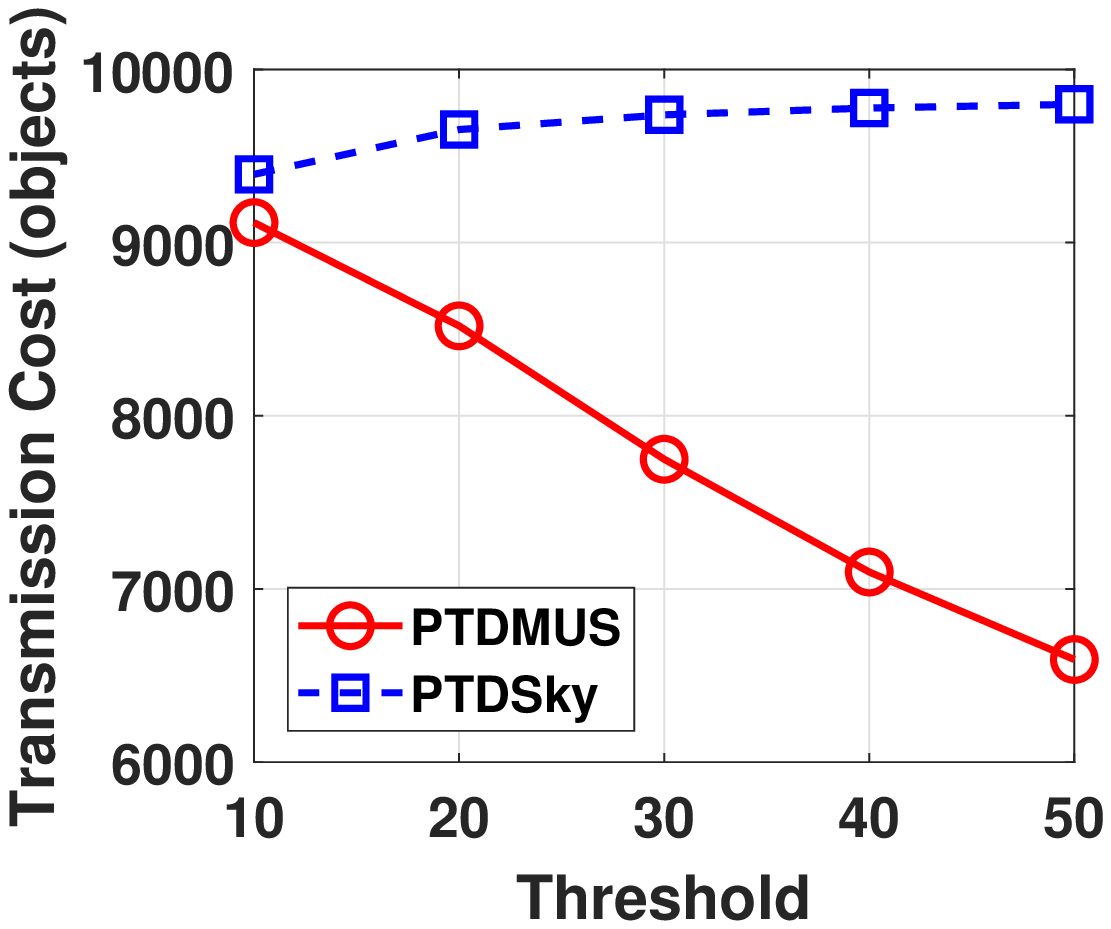}}
	\subfigure[Precision]{
		\label{fig:threshold:precision} 
		\includegraphics[width=0.24 \textwidth]{./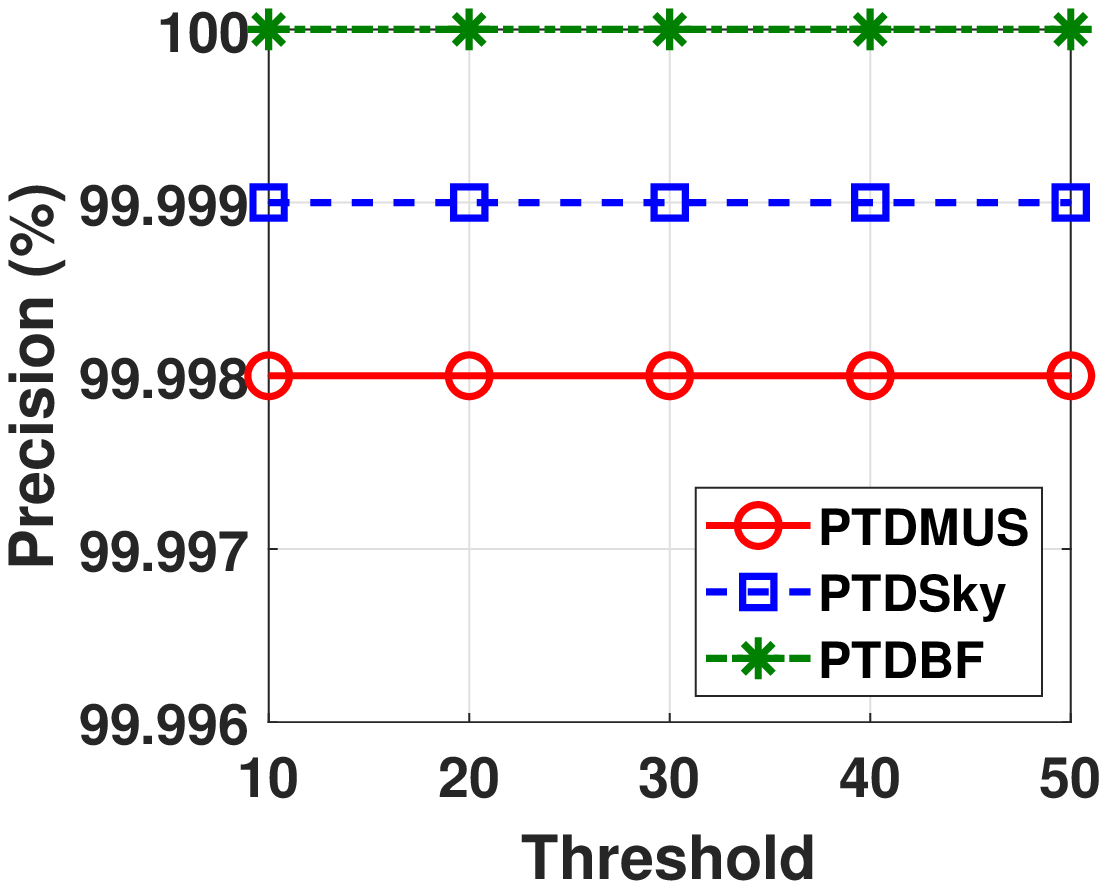}}
	\subfigure[Recall]{
		\label{fig:threshold:recall} 
		\includegraphics[width=0.24 \textwidth]{./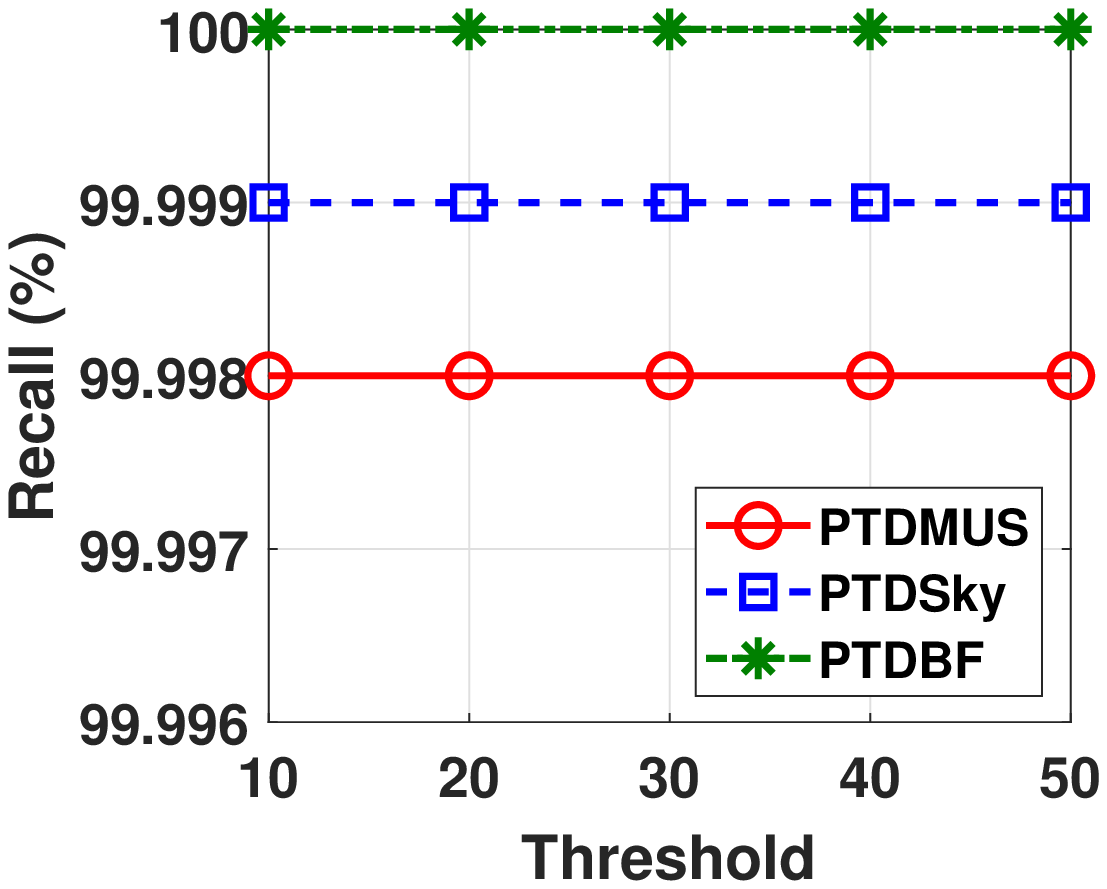}}
	\caption{Effect of threshold $\delta$ on \subref{fig:threshold:time} Computation Time, \subref{fig:threshold:transmission} Transmission Cost, \subref{fig:threshold:precision} Precision and \subref{fig:threshold:recall} Recall.
	}
	\label{fig:threshold} 
	\vspace{-10pt}
\end{figure*}
\begin{figure*}[t]
	\centering
	\subfigure[Computation Time]{
		\label{fig:dimension:time} 
		\includegraphics[width=0.24 \textwidth]{./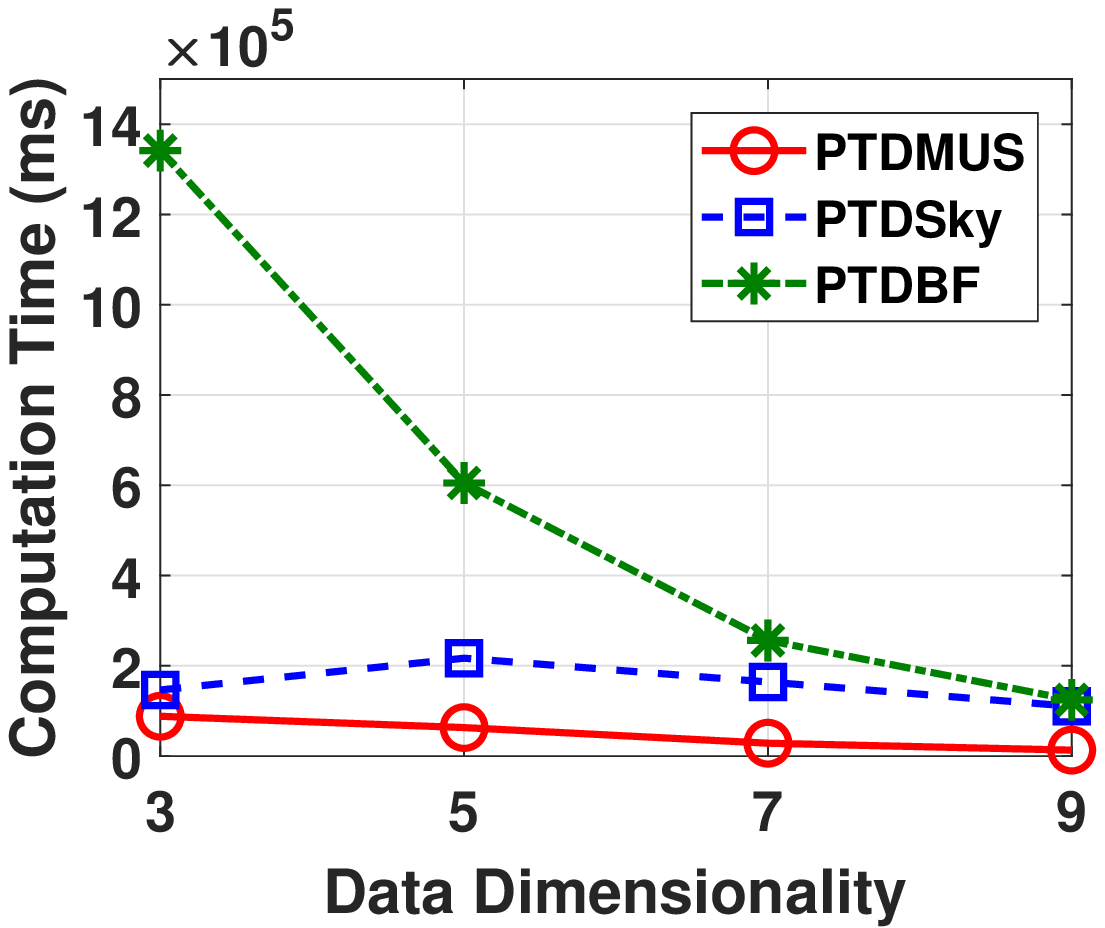}}
	\subfigure[Transmission Cost]{
		\label{fig:dimension:transmission} 
		\includegraphics[width=0.24 \textwidth]{./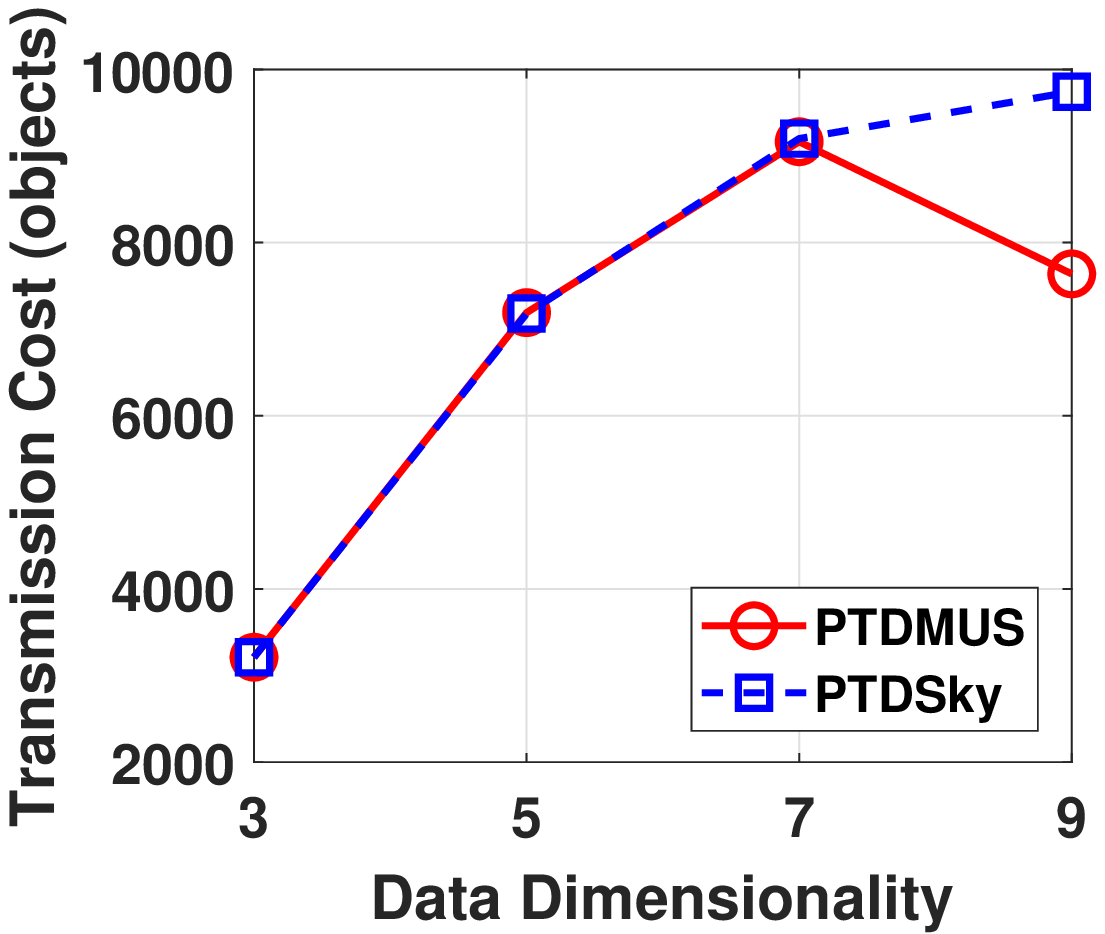}}
	\subfigure[Precision]{
		\label{fig:dimension:precision} 
		\includegraphics[width=0.24 \textwidth]{./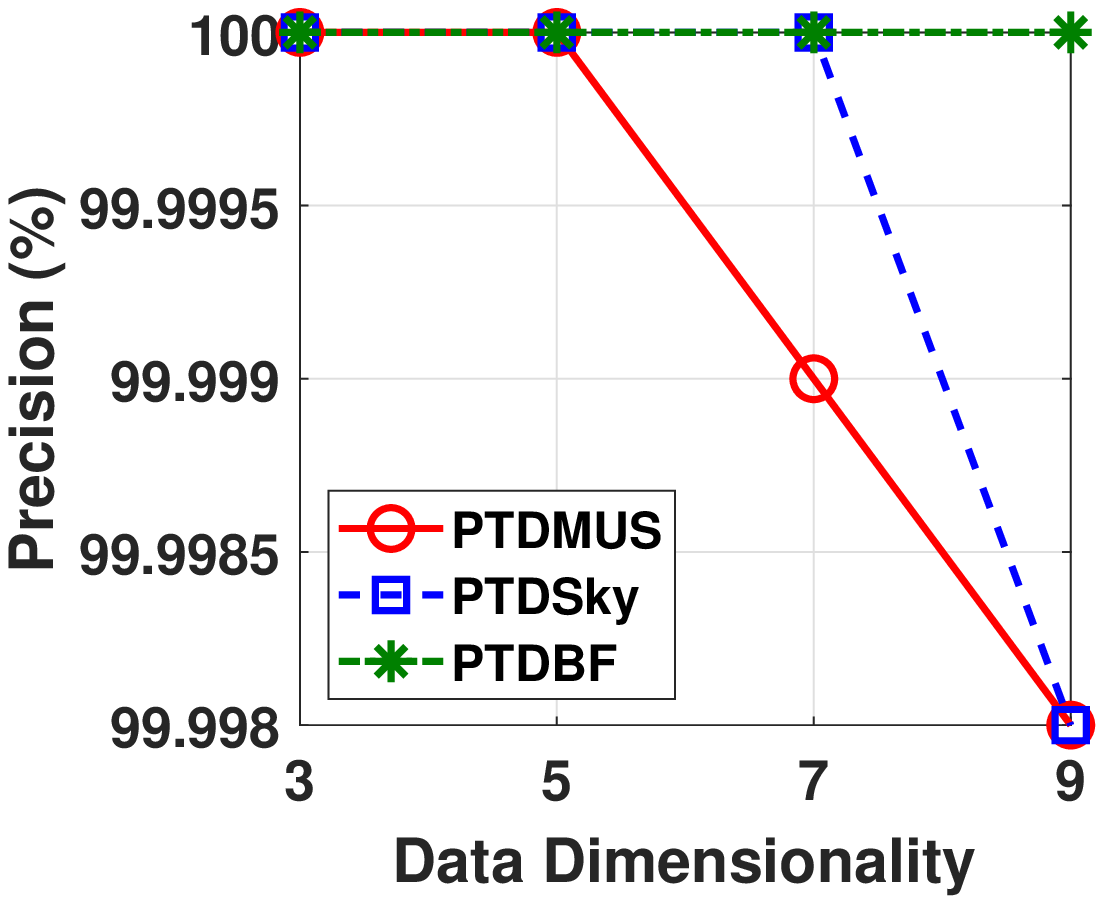}}
	\subfigure[Recall]{
		\label{fig:dimension:recall} 
		\includegraphics[width=0.24 \textwidth]{./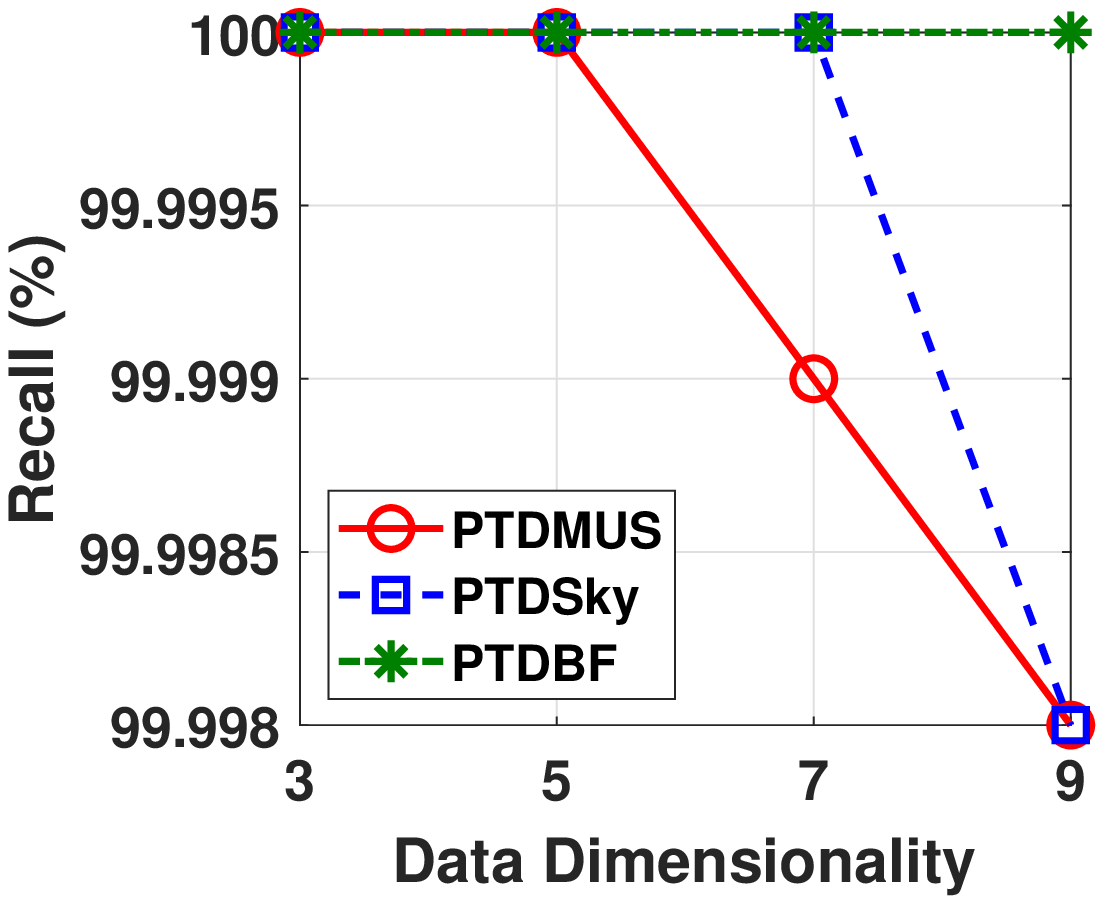}}
	\caption{Effect of data dimensionality $d$ on \subref{fig:dimension:time} Computation Time, \subref{fig:dimension:transmission} Transmission Cost, \subref{fig:dimension:precision} Precision and \subref{fig:dimension:recall} Recall.
	}
	\label{fig:dimension} 
	\vspace{-10pt}
\end{figure*}

\subsection{Threshold}
As shown in Fig.~\ref{fig:threshold:time}, the computation time of PTDSky grows linearly as the given threshold $\delta$ increases. This is because PTDSky needs to process more candidate objects for the threshold-based $k$-skyband when the data dimension is high ($d=9$), and when the threshold $\delta$ becomes loose, the computation time of PTDSky increases. PTDBF just needs ro handle all the uncertain data objects and sorts the result by the dominant score of object directly, thus having the worst computation time which is irrelevant to the threshold. PTDMUS has the best performance on computation time since it can avoid unnecessary computation on irrelevant objects with the minimum checking time. PTDMUS can perform almost 10 times faster than PTDSky when $\delta=50$.

According to Fig.~\ref{fig:threshold:transmission}, PTDMUS can save almost 30\% transmission cost comparing to PTDSky when $\delta=50$. In general, both PTDSky and PTDMUS need higher transmission cost since the local and the global candidate sets become large as $\delta$ increases. However, with a table recording the minimum checking times of possible candidates, the monitor and coordinator nodes in PTDMUS do not need to exchange the information of candidate sets too much if the continuous query result does not change a lot. As a result, PTDMUS can outperform PTDSky significantly. In addition, when $\delta$ increases, the global candidate set on the coordinator node becomes larger. Then PTDMUS can record more information (minimum checking times) of candidate objects, thereby avoiding the unnecessary transmission of irrelevant objects. In the rest of simulation, we choose $\delta=30$ as the default threshold. Note that PTDBF performs in a centralized way, so it doesn't have transmission cost.

Since we use threshold-based $k$-skyband to prune irrelevant objects in our proposed approach, we now measure its influence on the accuracy of result for the query. Fig.~\ref{fig:threshold:precision} and Fig.~\ref{fig:threshold:recall} show that PTDMUS only loses less than 0.001\% performance on accuracy and recall respectively. Such a tiny performance gap can be recognized as a tolerant error. In other words, with the minimum checking time, PTDMUS can reduce transmission cost significantly with good accuracy and recall in the meantime.

\subsection{Data Dimensionality}
Fig.~\ref{fig:dimension:time} shows that PTDBF has poor performance on computation time, especially when the dimension $d$ is small. In general, for each data object, the number of its dominated objects is large when $d$ is small. In other words, each data object has a high probability to be dominated by the other objects when $d$ is small. Hence, PTDBF needs more computations on dominance checks when $d$ is small.
In addition, from the implementation perspective, PTDBF needs much more branch operations (conditions) for the dominance checks between each pair of objects, so its computation time is the worst. In comparison with PTDBF, both $k$-skyband based methods, PTDSky and PTDMUS, need less computation time since $k$-skyband can utilize the characteristics of R-trees and MBRs for precluding irrelevant objects effectively. When the dimensionality becomes large ($d=9$), PTDSky and PTDBF have similar performance in transmission time. On the other hand, PTDMUS has the best computation time and outperforms PTDSky and PTDBF by more than 85\% when $d=9$.

As Fig.~\ref{fig:dimension:transmission} shown, PTDSky and PTDMUS have very similar performance in average computation cost when $d\leq 7$. In this simulation, the size of the given uncertain data set $|U|$ is 10,000. We can observe that PTDSky and PTDMUS need to transmit more than 9,000 candidate objects when $d\leq 7$. Such a phenomenon indicates that the score of an object dominating another objects decreases significantly, so the number of candidate objects becomes large and near to $|U|$. In the case of $d>7$, the coordinator node in PTDMUS can record the minimum checking time of more than 9,000 objects and the minimum checking time table can help coordinator node avoid transmitting the information of irrelevant objects to monitor nodes at some time slots (runs). Hence, the average transmission cost of PTDMUS can be improved nearly 20\% in such a scenario ($d=9$).

Even using a predictive mechanism to reduce the frequency of updating candidate objects, Fig.~\ref{fig:dimension:precision} and Fig.~\ref{fig:dimension:recall} show that PTDMUS can achieve almost the same performance on precision and recall as PTDSky does. In comparison with PTDSky, PTDMUS only loses 0.001\% performances on both precision and recall for $d=7$. In most applications, such a tiny lose of performance can be recognized as a tolerant error.

\begin{figure*}[t]
	\centering
	\subfigure[Computation Time]{
		\label{fig:number_monitor_node:time} 
		\includegraphics[width=0.24 \textwidth]{./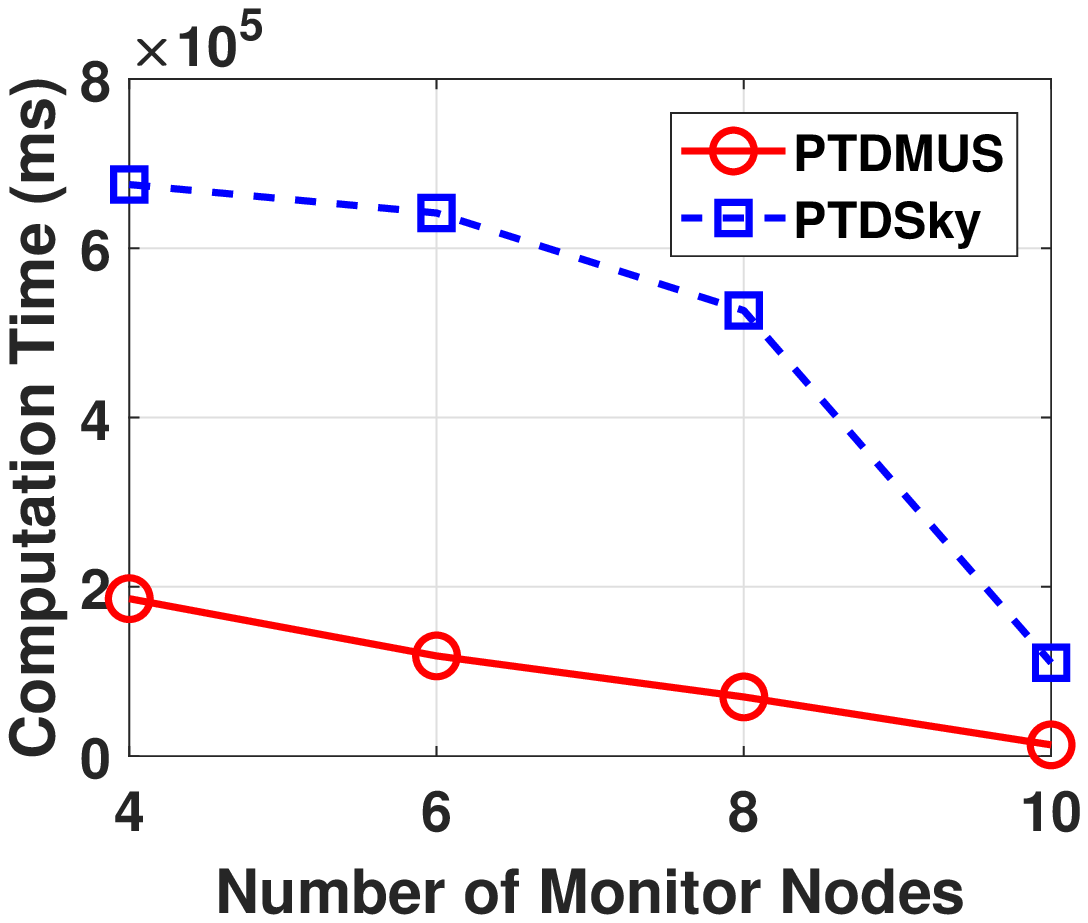}}
	\subfigure[Transmission Cost]{
		\label{fig:number_monitor_node:transmission} 
		\includegraphics[width=0.24 \textwidth]{./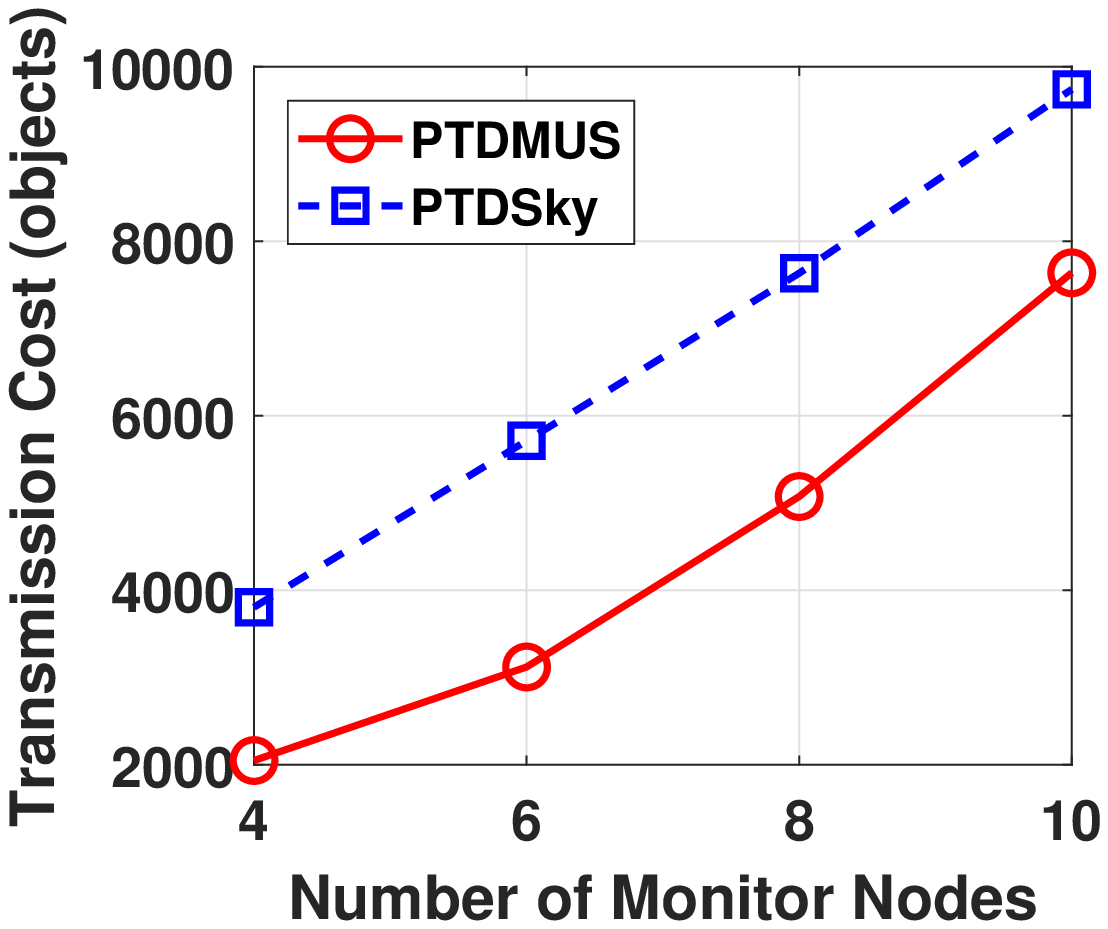}}
	\subfigure[Precision]{
		\label{fig:number_monitor_node:precision} 
		\includegraphics[width=0.24 \textwidth]{./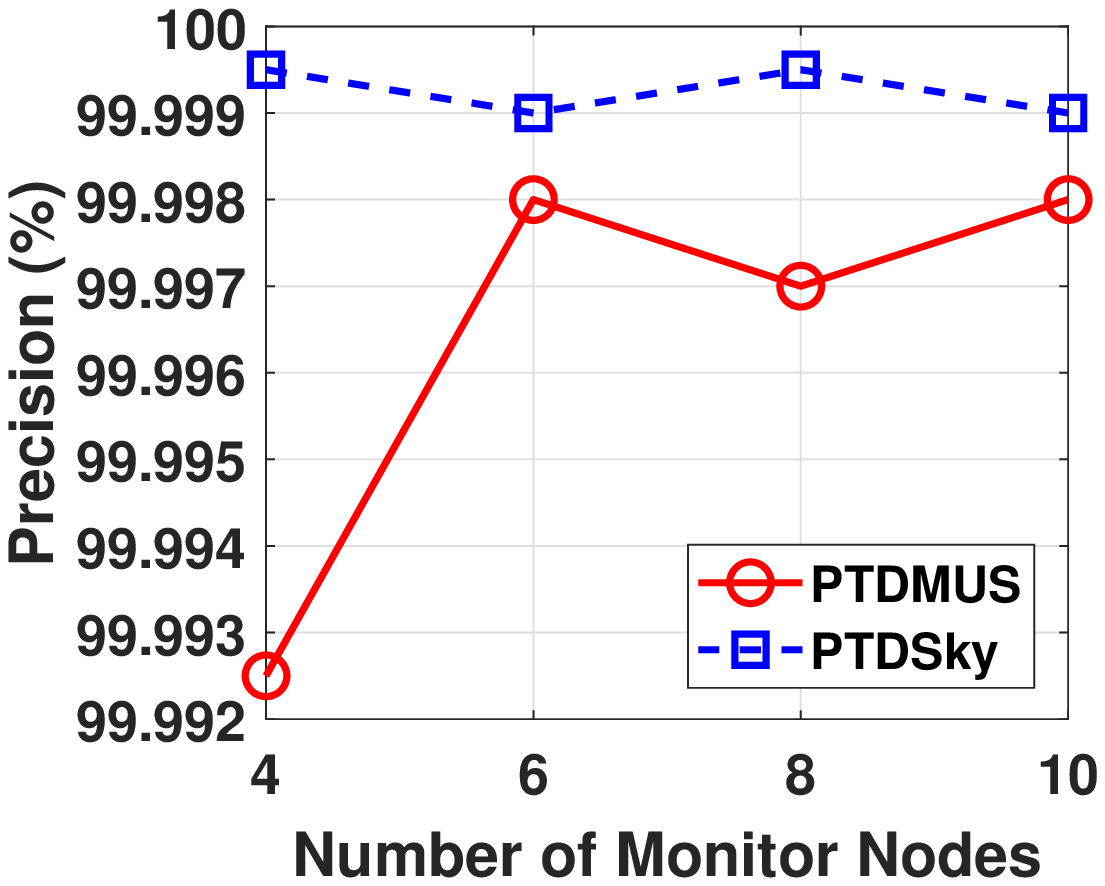}}
	\subfigure[Recall]{
		\label{fig:number_monitor_node:recall} 
		\includegraphics[width=0.24 \textwidth]{./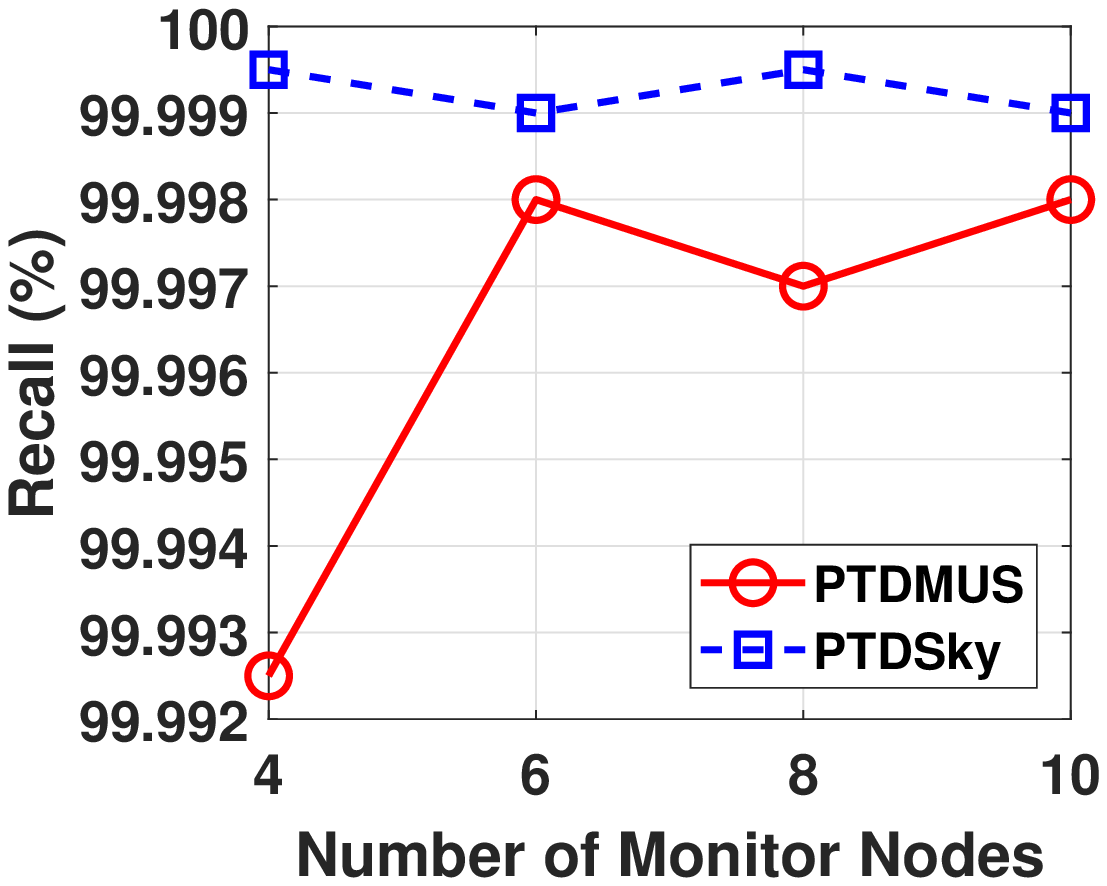}}
	\caption{Effect of number of monitor nodes $m$ on \subref{fig:number_monitor_node:time} Computation Time, \subref{fig:number_monitor_node:transmission} Transmission Cost, \subref{fig:number_monitor_node:precision} Precision and \subref{fig:number_monitor_node:recall} Recall.
	}
	\label{fig:number_monitor_node} 
	\vspace{-10pt}
\end{figure*}
\begin{figure*}[t]
	\centering
	\subfigure[Computation Time]{
		\label{fig:sliding_windows:time} 
		\includegraphics[width=0.24 \textwidth]{./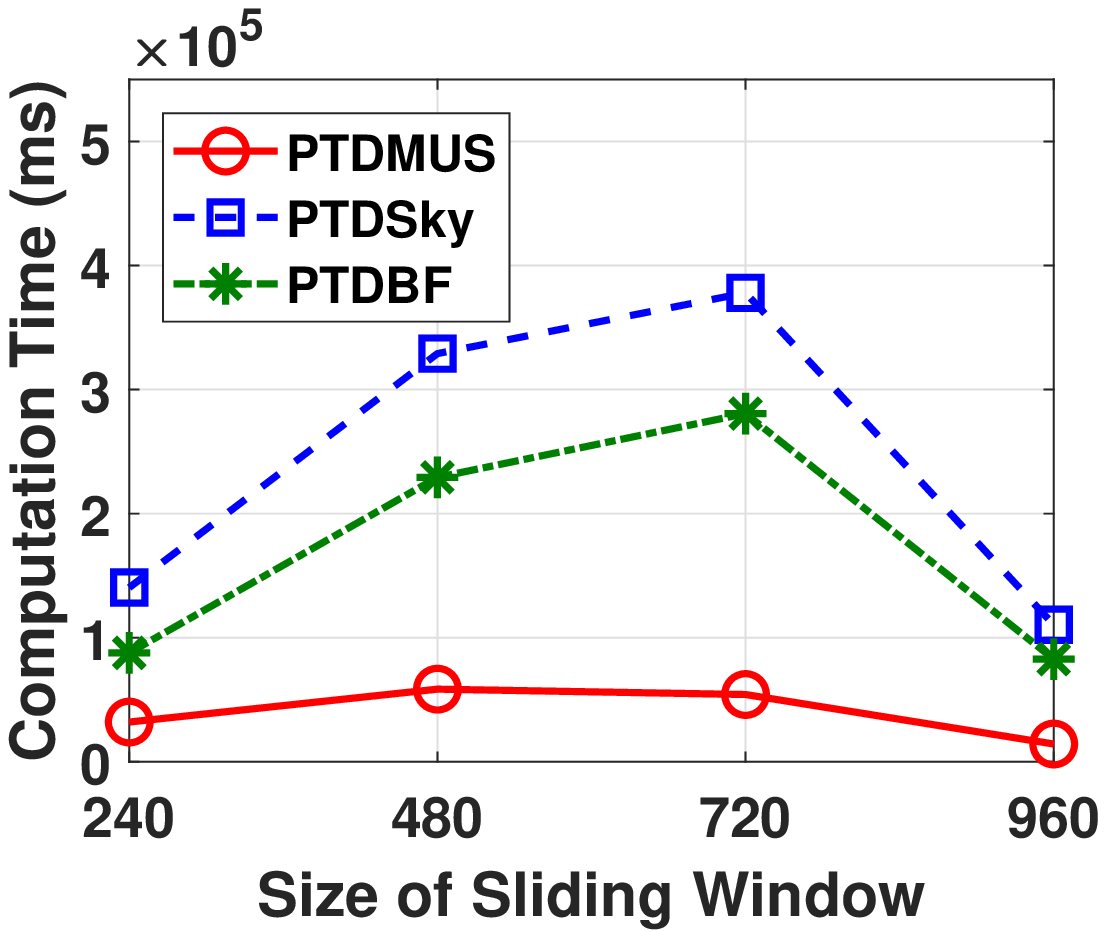}}
	\subfigure[Transmission Cost]{
		\label{fig:sliding_windows:transmission} 
		\includegraphics[width=0.24 \textwidth]{./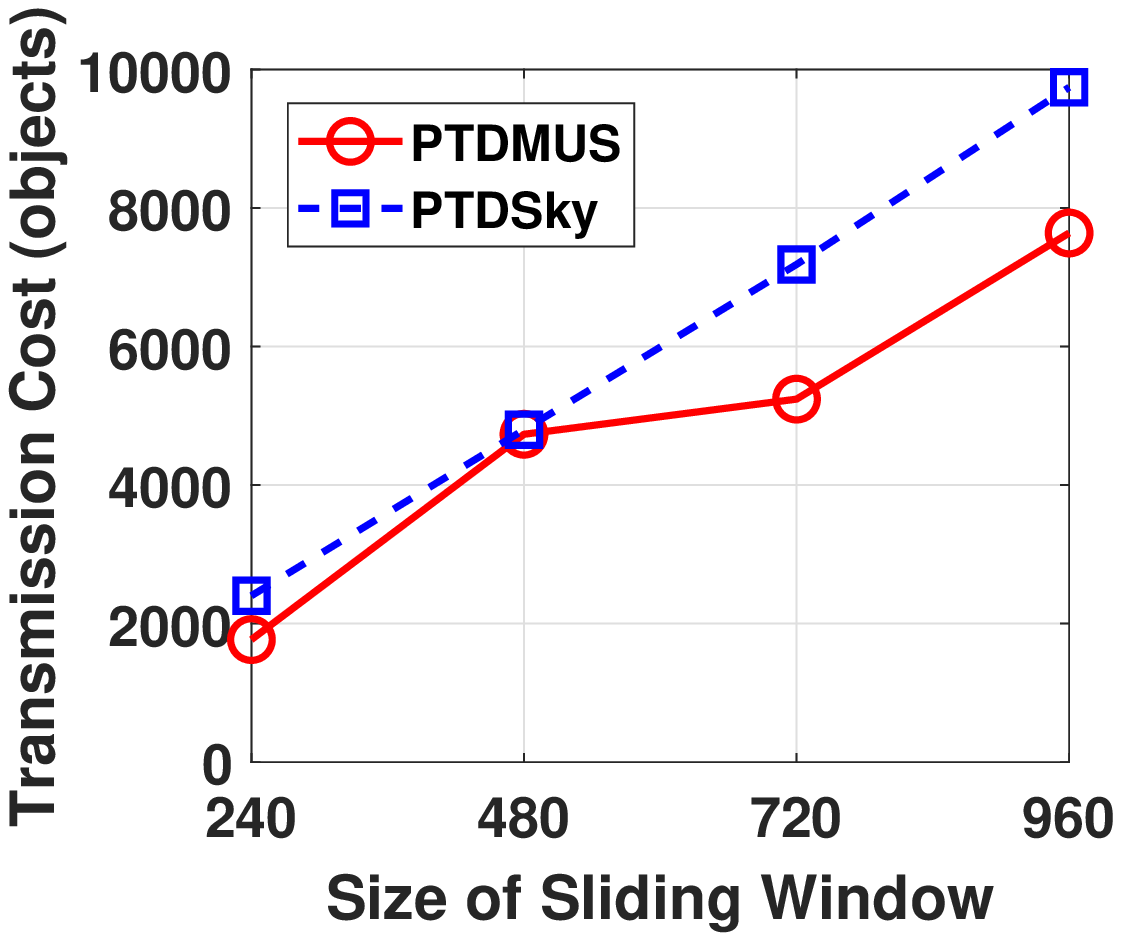}}
	\subfigure[Precision]{
		\label{fig:sliding_windows:precision} 
		\includegraphics[width=0.24 \textwidth]{./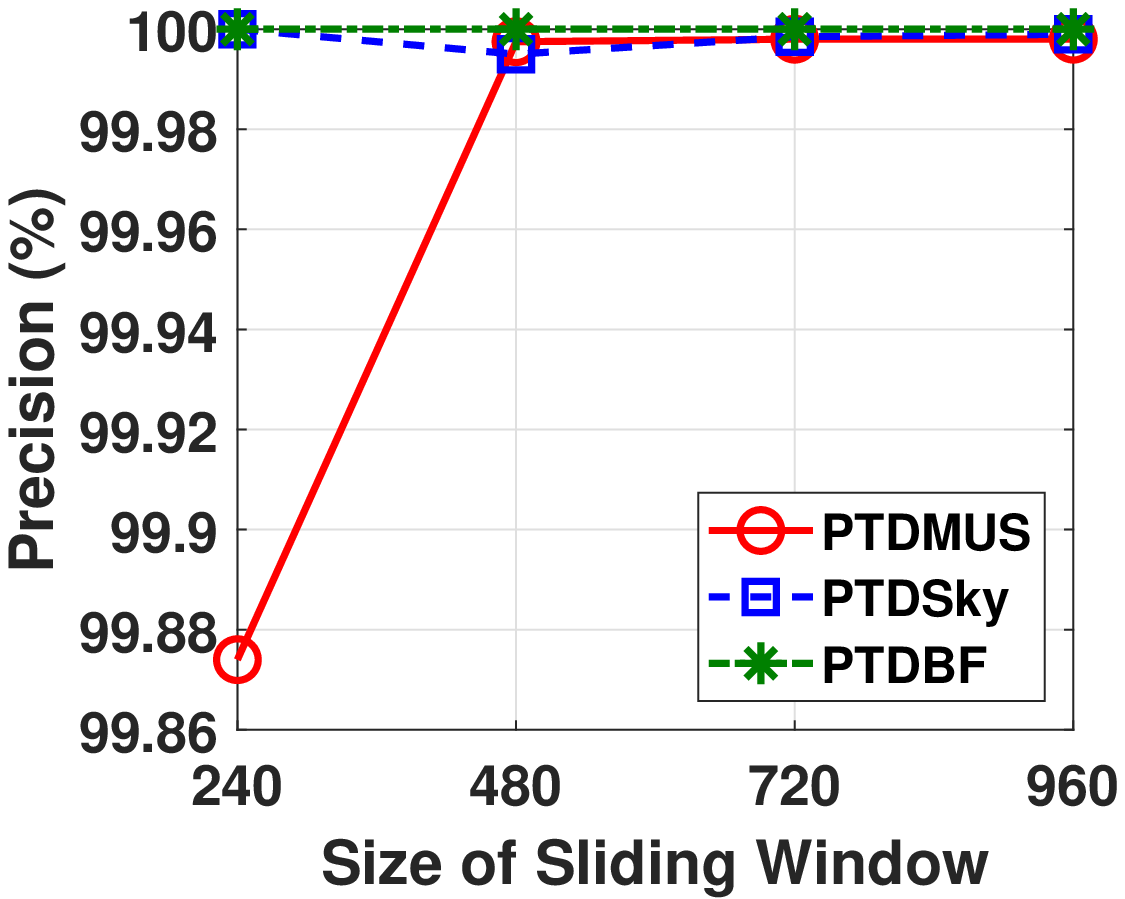}}
	\subfigure[Recall]{
		\label{fig:sliding_windows:recall} 
		\includegraphics[width=0.24 \textwidth]{./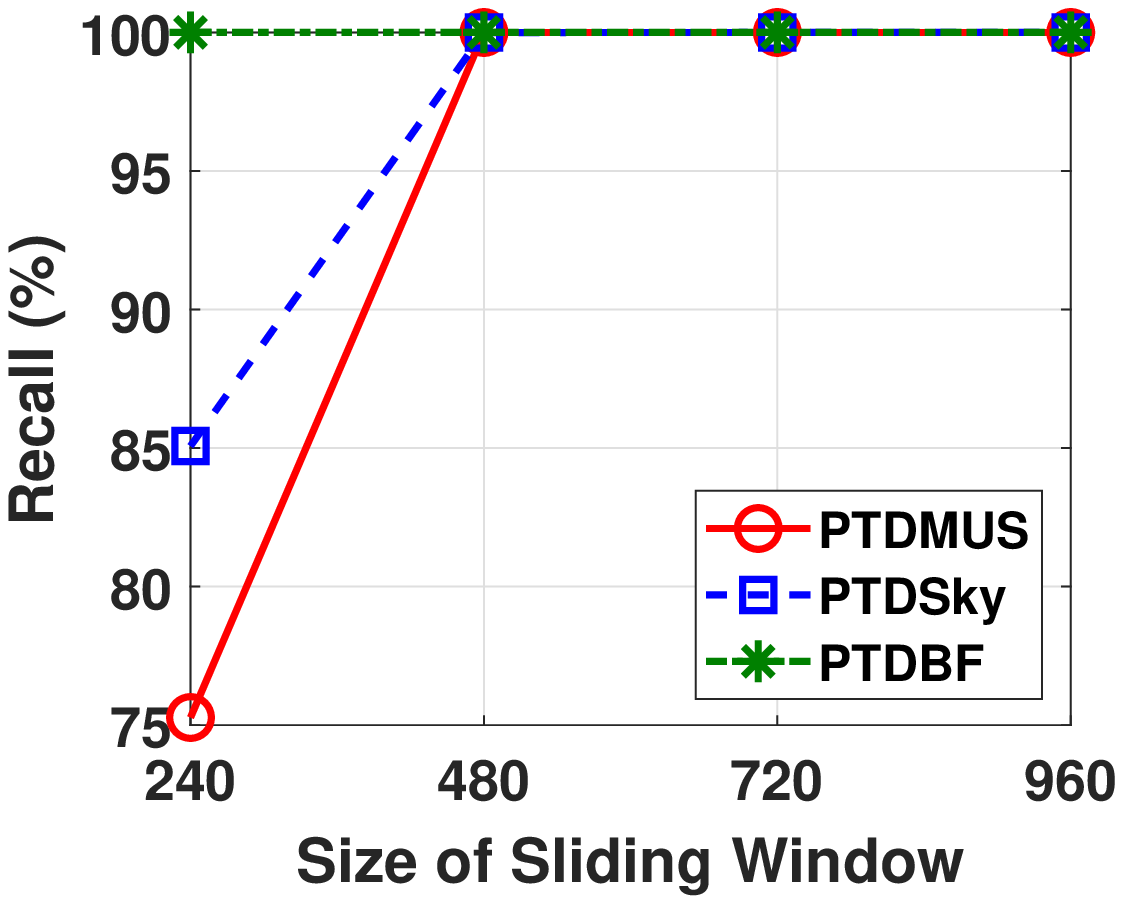}}
	\caption{Effect of size of sliding window $|SW_j|$ on \subref{fig:sliding_windows:time} Computation Time, \subref{fig:sliding_windows:transmission} Transmission Cost, \subref{fig:sliding_windows:precision} Precision and \subref{fig:sliding_windows:recall} Recall.
	}
	\label{fig:sliding_windows} 
	\vspace{-10pt}
\end{figure*}
\begin{figure*}[t]
	\centering
	\subfigure[Computation Time]{
		\label{fig:k:time} 
		\includegraphics[width=0.24 \textwidth]{./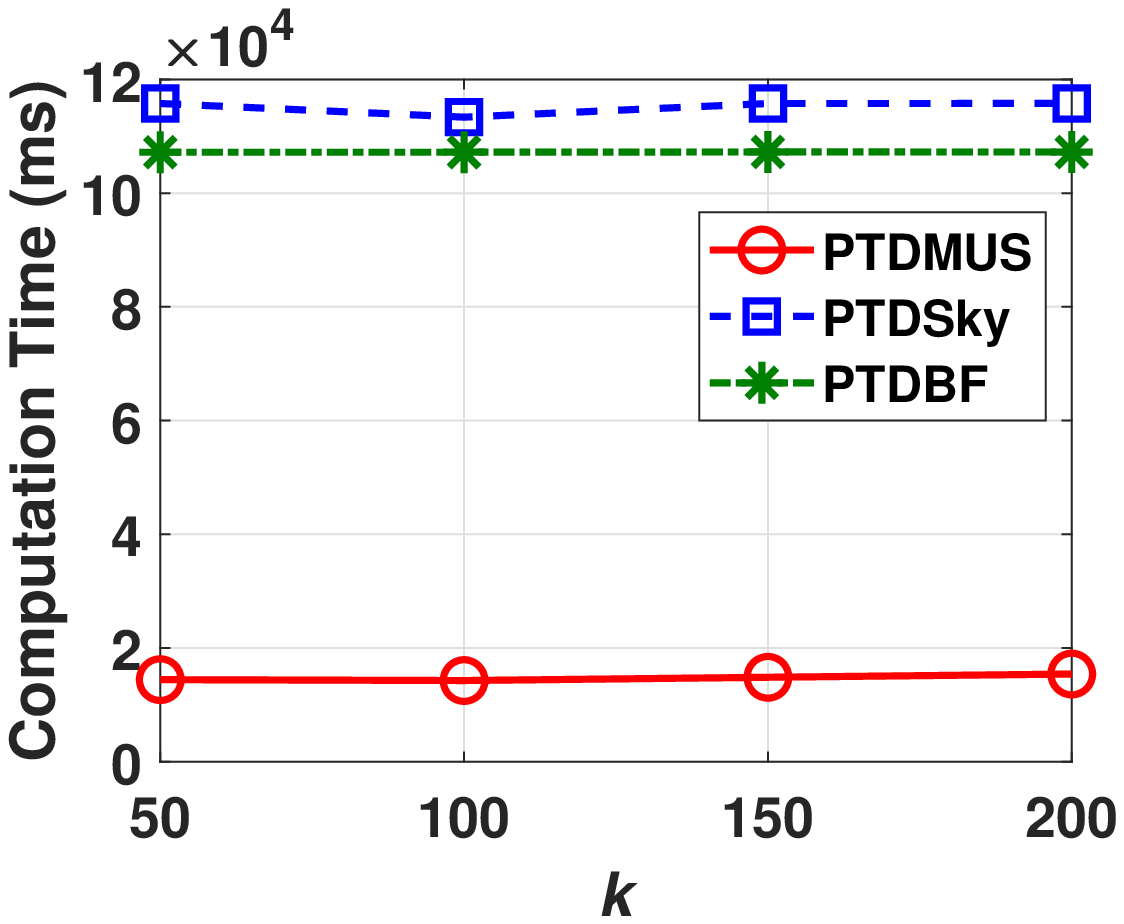}}
	\subfigure[Transmission Cost]{
		\label{fig:k:transmission} 
		\includegraphics[width=0.24 \textwidth]{./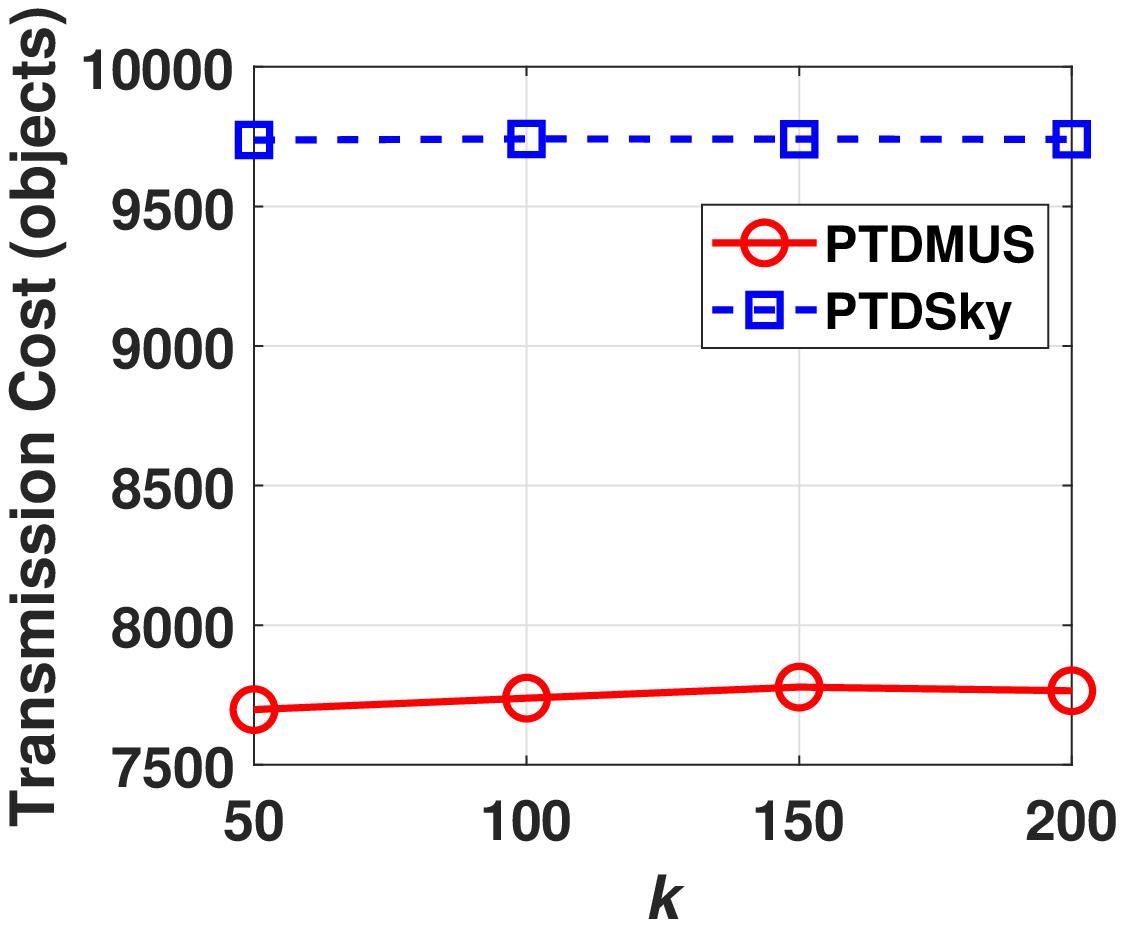}}
	\subfigure[Precision]{
		\label{fig:k:precision} 
		\includegraphics[width=0.24 \textwidth]{./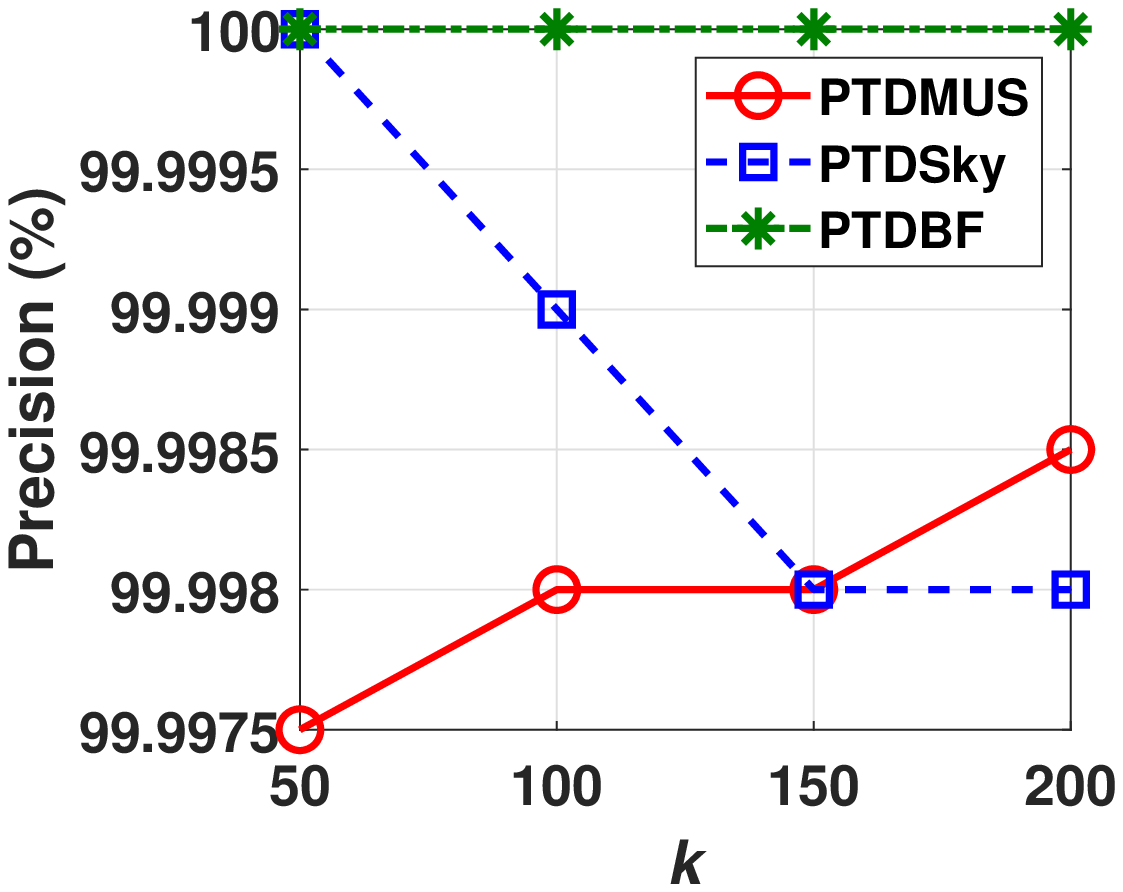}}
	\subfigure[Recall]{
		\label{fig:k:recall} 
		\includegraphics[width=0.24 \textwidth]{./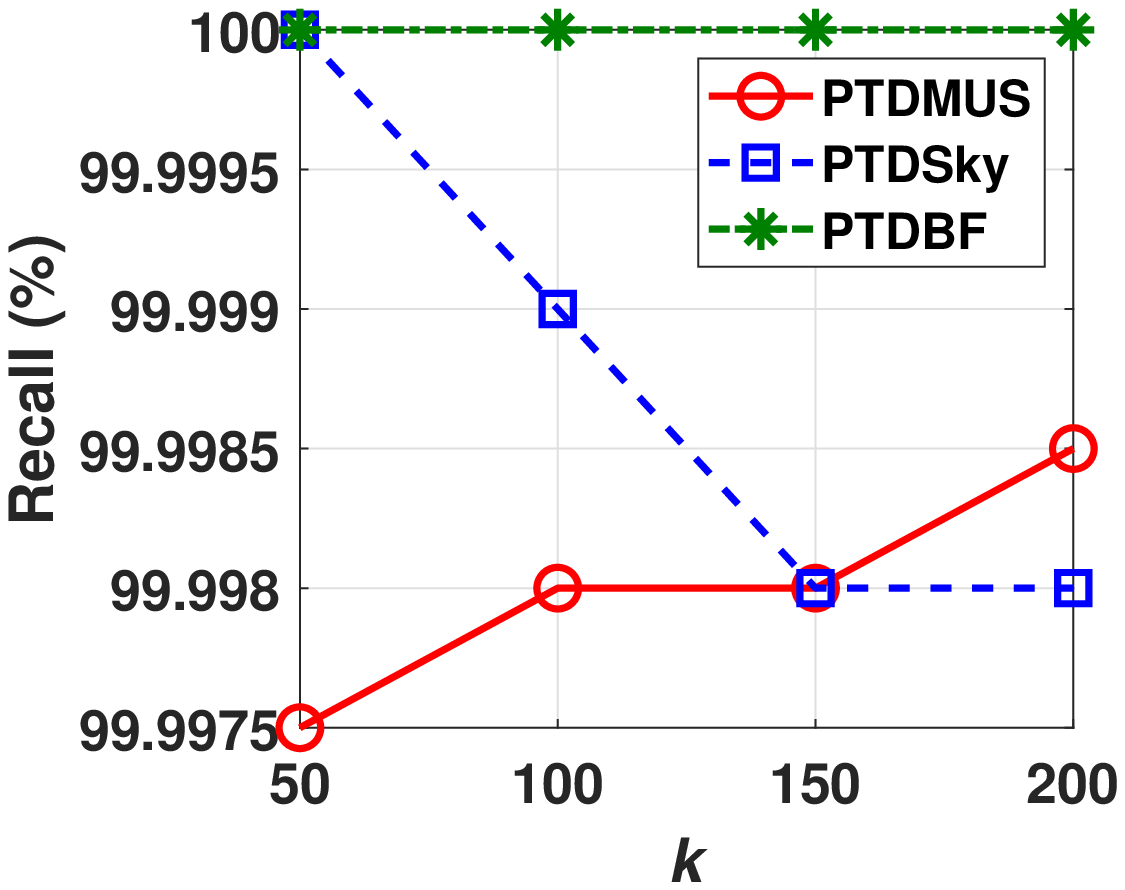}}
	\caption{Effect of value of $k$ on \subref{fig:k:time} Computation Time, \subref{fig:k:transmission} Transmission Cost, \subref{fig:k:precision} Precision and \subref{fig:k:recall} Recall.
	}
	\label{fig:k} 
	\vspace{-10pt}
\end{figure*}

\subsection{Number of Monitor Nodes}
The considered environment is implemented in a parallel model. We now discuss the performance of each method in different scenarios with various numbers of monitor nodes. Note that there are no results of PTDBF in this part since PTDBF is a centralized method.
Fig.~\ref{fig:number_monitor_node:time} shows that both PTDMUS and PTDSky need less computation time if the number of monitor nodes increases. With the minimum checking time mechanism, PTDMUS precludes irrelevant objects more effectively than PTDSky does. Thus, PTDMUS outperforms PTDSky in computation time by almost 60\%. Fig.~\ref{fig:number_monitor_node:transmission} indicates that the average total transmission costs between monitor and coordinator nodes in PTDMUS and PTDSky are increasing as monitor nodes become more. In this simulation, we fix the size of sliding window in each monitor node, so the transmission cost is related to $m\times |SW_j|$. With the minimum checking time table, PTDMUS can save about more than 2,000 transmission cost (objects) under various number of monitor nodes.

According to the simulation results in Fig.~\ref{fig:number_monitor_node:precision} and Fig.~\ref{fig:number_monitor_node:recall}, PTDMUS is only $0.01\%$ worse than PTDSky on both accuracy and recall in different scenarios with different number of monitor nodes. Again, such a tiny performance gap can be recognized as a tolerant error for most applications. In summary, in comparison with PTDSky, PTDMUS reduces the average computation time and the transmission cost significantly while maintaining nearly identical accuracy and recall.

\subsection{Size of Sliding Window}
In this part, we discuss the effect of different sizes of sliding windows $|SW_j|$ on monitor nodes. In Fig.~\ref{fig:sliding_windows:time}, it shows that all the methods have better computation time performance when $|SW_j|$ is relatively small ($|SW_j|=240$) or large ($|SW_j|=960$). For each monitor node, the sliding window can be recognized as a buffer and it is used to save the data objects that need to be processed. In general, the computation cost will increase as the size of sliding window becomes big. It was shown in Fig.~\ref{fig:sliding_windows:time} that the computation time of all the methods is significantly reduced if $|SW_j|=960$. The reason is that the coordinator node records a large candidate set and the upper bound of its size is $m\times|SW_j|$. In the case of $|SW_j|=960$, the coordinator node will record 9,600 data objects at most and it approaches to the size of the given data set $|U|=10,000$. According to~\eqref{eq:monitor_time}, all the methods only need to execute $\varDelta t=40$ runs (slots) and each monitor node deals with only one new input object at every time slot. In such a scenario, the score for the new input object to become the candidate object is very low. Thus, the computation cost of updating the global candidate on the coordinator node also significantly decreases. In the case of $|SW_j|=240$, the reason for all the compared methods to have a good computation cost is that the small $|SW_j|$ makes the computation time of each run (slot) very fast. In summary, PTDMUS has the best performance on computation cost in all the considered cases with different sizes of sliding windows. As shown in Fig.~\ref{fig:sliding_windows:transmission}, PTDMUS needs lower transmission cost than PTDSky does in all the scenarios with different sizes of sliding windows. PTDSky has a similar performance to PTDMUS in transmission cost only when $|SW_j|=480$ but PTDMUS is still better.

Fig.~\ref{fig:sliding_windows:precision} and Fig.~\ref{fig:sliding_windows:recall} show that both PTDMUS and PTDSky achieves 99.998\% precision and recall when $|SW_j|$ is 720 or 960. If $|SW_j|=480$, the performance gap between PTDMUS and PTDSky is smaller than 0.01\% in terms of precision and recall. If $|SW_j|=240$, PTDMUS loses about 0.126\% precision and 9.8\% recall in comparison with PTDSky. However, such a high precision (99.87\%) performance provided by PTDMUS is still allowable for most applications except for financial and emergency services.
\vspace{-5pt}

\begin{figure*}[t]
	\centering
	\subfigure[Computation Time]{
		\label{fig:uncertainty:time} 
		\includegraphics[width=0.24 \textwidth]{./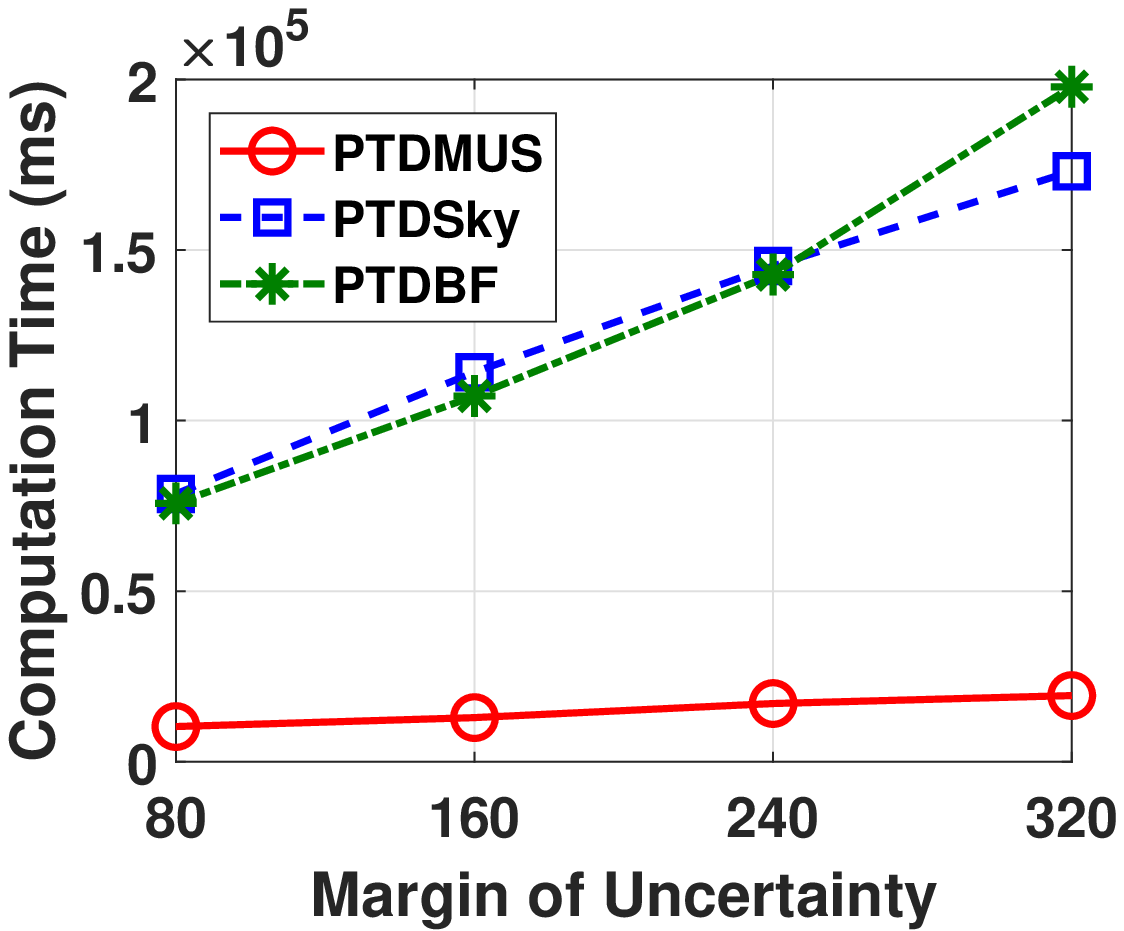}}
	\subfigure[Transmission Cost]{
		\label{fig:uncertainty:transmission} 
		\includegraphics[width=0.24 \textwidth]{./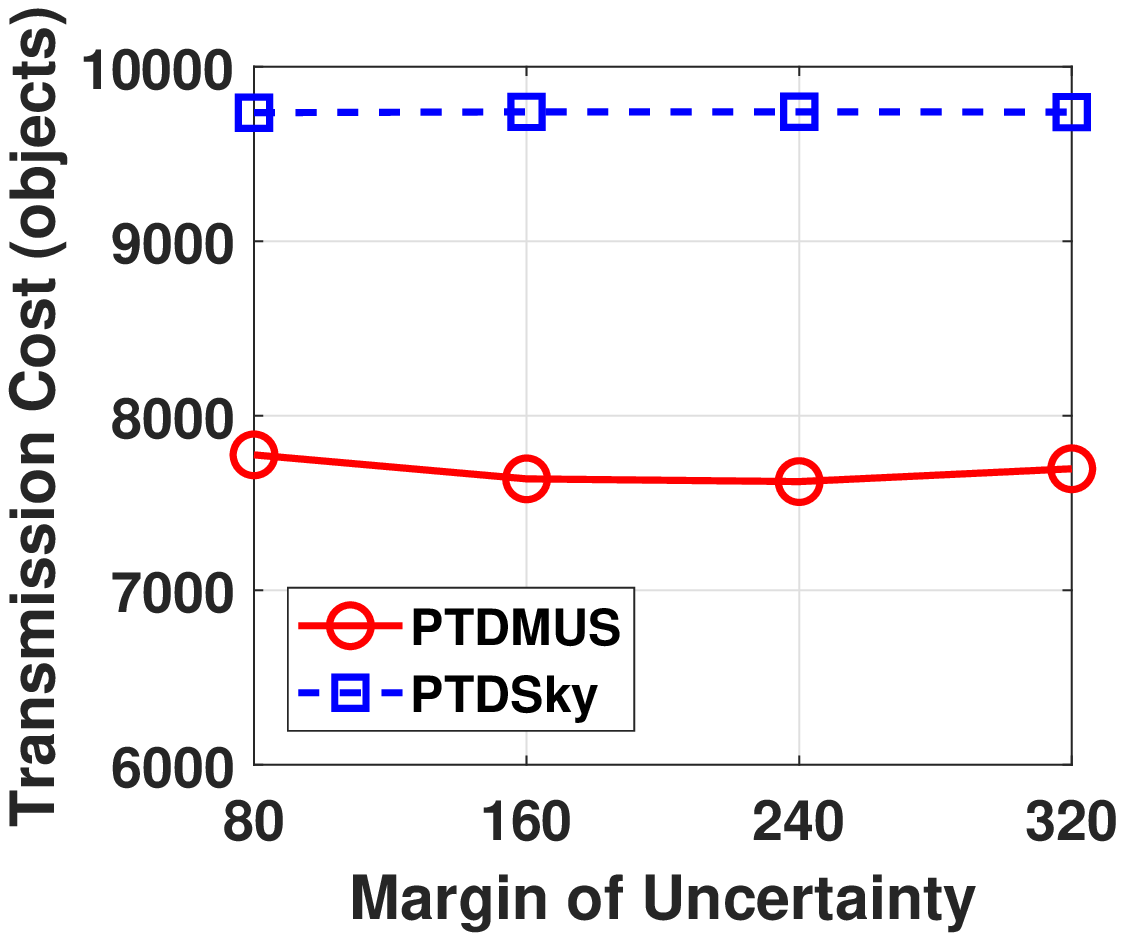}}
	\subfigure[Precision]{
		\label{fig:uncertainty:precision} 
		\includegraphics[width=0.24 \textwidth]{./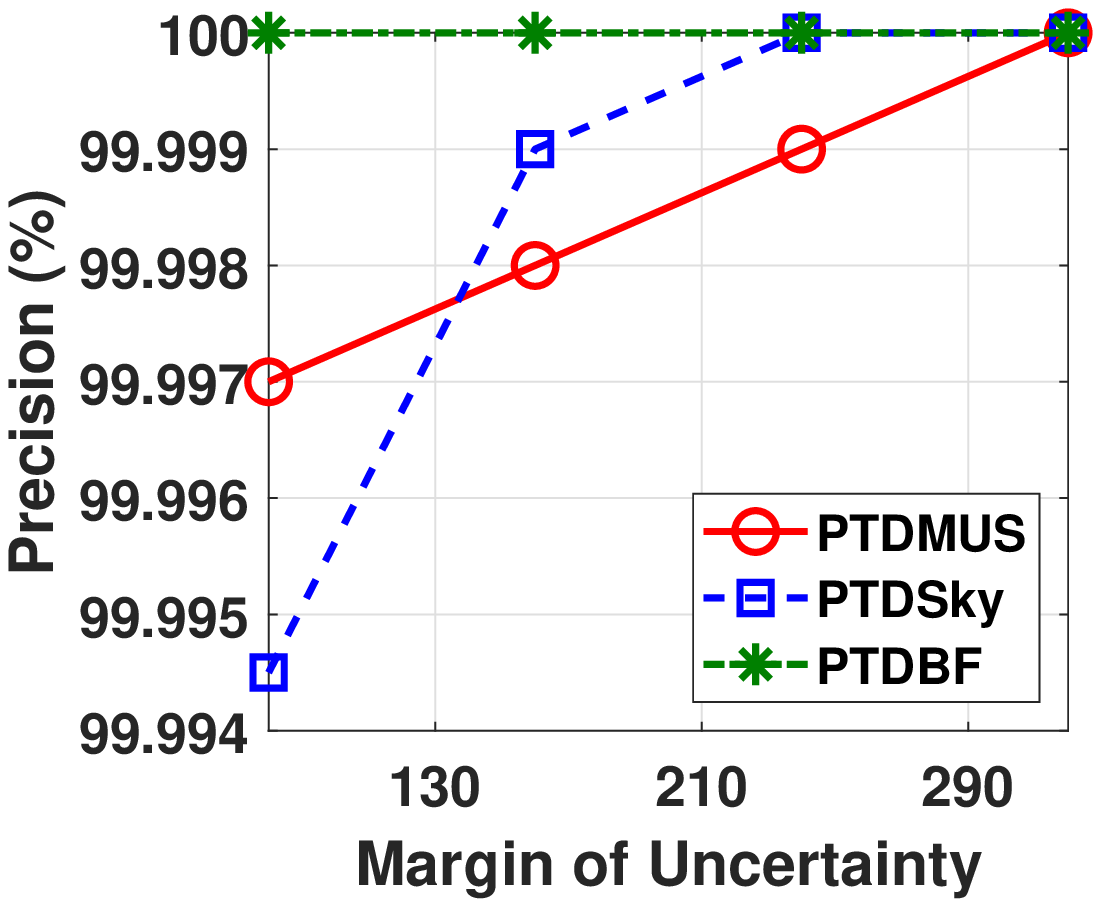}}
	\subfigure[Recall]{
		\label{fig:uncertainty:recall} 
		\includegraphics[width=0.24 \textwidth]{./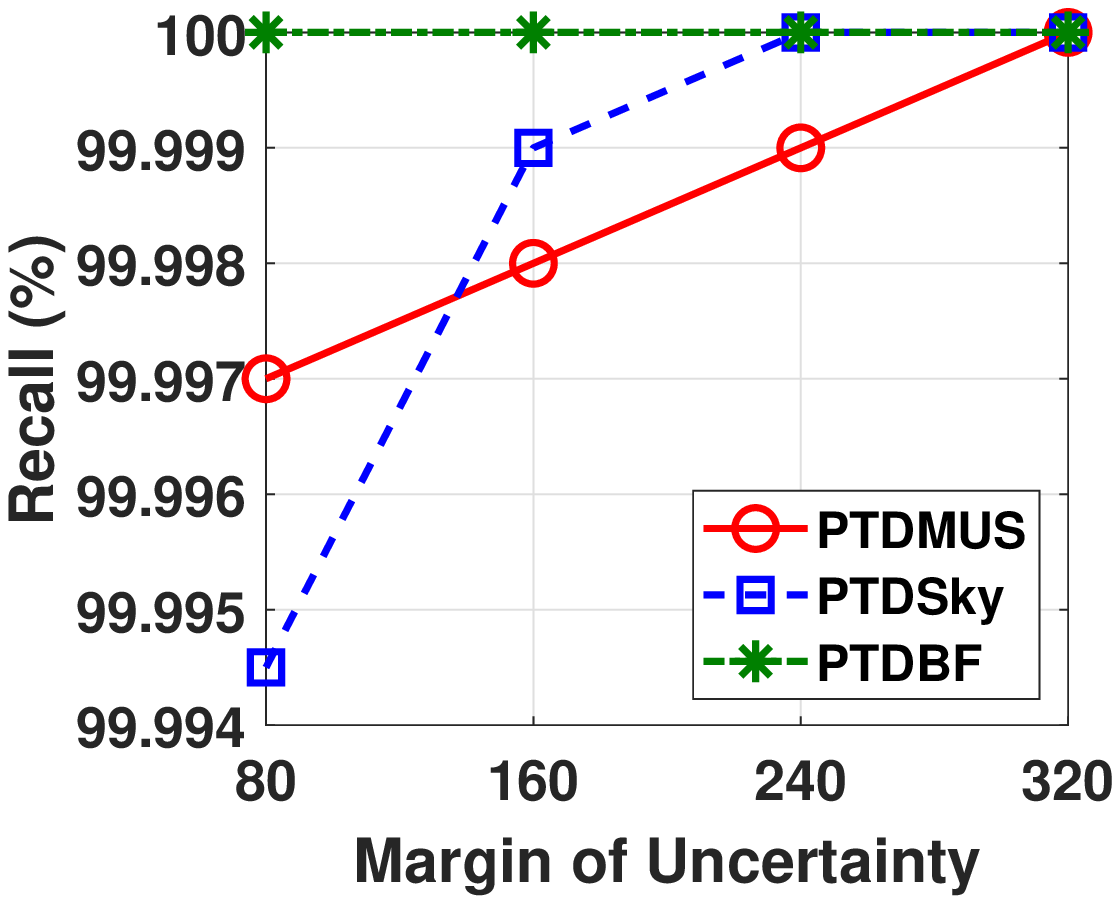}}
	\caption{Effect of margin of uncertainty $M$ on \subref{fig:uncertainty:time} Computation Time, \subref{fig:uncertainty:transmission} Transmission Cost, \subref{fig:uncertainty:precision} Precision and \subref{fig:uncertainty:recall} Recall.
	}
	\label{fig:uncertainty} 
	\vspace{-5pt}
\end{figure*}

\subsection{Value of \textit{k}}
Since we consider the top-$k$ dominating query, various values of $k$ may affect the performance. Fig.~\ref{fig:k:time} shows that the performance of all the methods in terms of computation time are independent of the value of $k$. In the case of a high dimensional data set ($d=9$), PTDSky is slightly worse than PTDBF since the coordinator node wastes too much computation time in computing the global probabilistic threshold-based $k$-skyband with too many irrelevant objects. Such a similar result has been presented in Fig.~\ref{fig:dimension:time}. Conversely, PTDMUS improves more than 80\% computation time comparing to PTDSky and PTDBF. From Fig.~\ref{fig:k:transmission}, we can observe that the value of $k$ is also independent of the transmission cost for both PTDMUS and PTDSky. The reason is that both PTDMUS and PTDSky use a threshold $\delta$ to preclude the irrelevant objects, where $\delta$ is much smaller than $k$. In addition, PTDMUS can save almost 20\% transmission cost due to the usage of the minimum checking time.

Fig.~\ref{fig:k:precision} and Fig.~\ref{fig:k:recall} show that PTDMUS has the same trend in precision and recall. The precision and recall of PTDMUS slightly increase as the value of $k$ increases. When $k=200$, PTDMUS can achieve $99.9985\%$ precision and recall. Although PTDSky also has the same precision and recall, PTDSky performs better than PTDMUS in precision and recall as the value of $k$ becomes smaller. PTDSky achieves 100\% precision and recall when $k\leq 50$. If $k>150$, PTDMUS will achieve better precision and recall than PTDSky does.

\subsection{Margin of Uncertainty}
We last discuss the effect of object's margin of uncertainty $M$, which is also called the object size. In general, the MBR of an uncertain data object becomes large as $M$ increases and thus the occurrence of partial dominance will increase. As shown in Fig.~\ref{fig:uncertainty:time}, PTDMUS has the best computation time performance and with 60\% improvement in comparison with PTDSky and PTDBF. PTDSky and PTDBF have similar computation time for $M\leq 240$. When $M>240$, PTDBF becomes the worst one due to the large occurrence of partial dominants.
In Fig.~\ref{fig:uncertainty:transmission}, it is shown that the margin of uncertainty $M$ is independent to the size of the candidate set. Thus, the transmission costs of PTDMUS and PTDSky are not affected by the margin of uncertainty.

According to the results in Fig.~\ref{fig:uncertainty:precision} and Fig.~\ref{fig:uncertainty:recall}, the precision and recall of the query result provided by PTDMUS linearly increase as the margin of uncertainty $M$ increases. On the other hand, the precision and recall of PTDSky's query result increase more significantly as $M$ becomes larger. PTDSky can provide the query result with higher precision and recall only if $ 160\leq M<320$. For $M\leq 80$, PTDMUS will has better precision and recall than PTDSky by more than 0.0025\%.

\section{Conclusion}
\label{sec:conclusion}
In this paper, we have presented a new approach for Probabilistic Top-$k$ Dominating query over Multiple Uncertain data Streams (PTDMUS) to improve the computation efficiency of probabilistic top-$k$ dominating query for Edge-IoT applications. With the parallelism, the monitor nodes use the value of $k$ and threshold-based probabilistic $k$-skyband to preclude most of the irrelevant objects in advance, thereby significantly reducing transmission cost. The coordinator node caches the temporary result and uses the proposed approach, minimum checking time, to reduce the frequency of computing the dominant score of each object in the cache table. Such a way can effectively minimize the computation time and incrementally update the result of the probabilistic top-$k$ dominating query with less update frequency. The simulation results show that PTDMUS can improve the computation performance effectively, while keeping good precision and recall of result.

In the future, we are going to apply PTDMUS to mobile edge computing frameworks for making the multi-criteria decision on the dynamic placement of drone base stations, thus providing reliable communication services for specific purposes and scenarios.



\ifCLASSOPTIONcompsoc
  \section*{Acknowledgments}
\else
  \section*{Acknowledgment}
\fi

This research is partially supported by Ministry of Science and Technology under the Grant MOST 107-2221-E-027-099-MY2 and MOST 108-2634-F-009-006- through Pervasive Artificial Intelligence Research (PAIR) Labs, Taiwan.

\ifCLASSOPTIONcaptionsoff
  \newpage
\fi



\bibliographystyle{IEEEtran}
\bibpreamble
\renewcommand{\bibfont}{\footnotesize}
\bibliography{IEEEabrv,reference}

\begin{thebibliography}{10}
\providecommand{\url}[1]{#1}
\csname url@samestyle\endcsname
\providecommand{\newblock}{\relax}
\providecommand{\bibinfo}[2]{#2}
\providecommand{\BIBentrySTDinterwordspacing}{\spaceskip=0pt\relax}
\providecommand{\BIBentryALTinterwordstretchfactor}{4}
\providecommand{\BIBentryALTinterwordspacing}{\spaceskip=\fontdimen2\font plus
\BIBentryALTinterwordstretchfactor\fontdimen3\font minus
  \fontdimen4\font\relax}
\providecommand{\BIBforeignlanguage}[2]{{%
\expandafter\ifx\csname l@#1\endcsname\relax
\typeout{** WARNING: IEEEtran.bst: No hyphenation pattern has been}%
\typeout{** loaded for the language `#1'. Using the pattern for}%
\typeout{** the default language instead.}%
\else
\language=\csname l@#1\endcsname
\fi
#2}}
\providecommand{\BIBdecl}{\relax}
\BIBdecl

\bibitem{8086146}
G.~Tanganelli, C.~Vallati, and E.~Mingozzi, ``Edge-centric distributed
  discovery and access in the internet of things,'' \emph{IEEE Internet of
  Things Journal}, vol.~5, no.~1, pp. 425--438, Feb. 2018.

\bibitem{6740844}
A.~Zanella, N.~Bui, A.~Castellani, L.~Vangelista, and M.~Zorzi, ``Internet of
  things for smart cities,'' \emph{IEEE Internet of Things Journal}, vol.~1,
  no.~1, pp. 22--32, Feb. 2014.

\bibitem{7929400}
C.~M. Huang, C.~H. Shao, S.~z.~Xu, and H.~Zhou, ``The social internet of thing
  (s-iot)-based mobile group handoff architecture and schemes for proximity
  service,'' \emph{IEEE Transactions on Emerging Topics in Computing}, vol.~5,
  no.~3, pp. 425--437, Jul. 2017.

\bibitem{8166730}
H.~El-Sayed, S.~Sankar, M.~Prasad, D.~Puthal, A.~Gupta, M.~Mohanty, and C.~Lin,
  ``Edge of things: The big picture on the integration of edge, iot and the
  cloud in a distributed computing environment,'' \emph{IEEE Access}, vol.~6,
  pp. 1706--1717, 2018.

\bibitem{8089336}
J.~Pan and J.~McElhannon, ``Future edge cloud and edge computing for internet
  of things applications,'' \emph{IEEE Internet of Things Journal}, vol.~5,
  no.~1, pp. 439--449, Feb. 2018.

\bibitem{GAI201646}
K.~Gai, M.~Qiu, H.~Zhao, L.~Tao, and Z.~Zong, ``Dynamic energy-aware
  cloudlet-based mobile cloud computing model for green computing,''
  \emph{Journal of Network and Computer Applications}, vol.~59, pp. 46 -- 54,
  Jan. 2016.

\bibitem{GAI2018126}
K.~Gai, M.~Qiu, and H.~Zhao, ``Energy-aware task assignment for mobile
  cyber-enabled applications in heterogeneous cloud computing,'' \emph{Journal
  of Parallel and Distributed Computing}, vol. 111, pp. 126 -- 135, Jan. 2018.

\bibitem{8425298}
K.~{Gai} and M.~{Qiu}, ``Reinforcement learning-based content-centric services
  in mobile sensing,'' \emph{IEEE Network}, vol.~32, no.~4, pp. 34--39, Jul.
  2018.

\bibitem{Hose2012}
K.~Hose and A.~Vlachou, ``A survey of skyline processing in highly distributed
  environments,'' \emph{The VLDB Journal}, vol.~21, no.~3, pp. 359--384, Jun.
  2012.

\bibitem{LIU201540}
C.-M. Liu and S.-W. Tang, ``An effective probabilistic skyline query process on
  uncertain data streams,'' \emph{Procedia Computer Science}, vol.~63, pp. 40
  -- 47, 2015.

\bibitem{1583586}
Y.~Tao and D.~Papadias, ``Maintaining sliding window skylines on data
  streams,'' \emph{IEEE Transactions on Knowledge and Data Engineering},
  vol.~18, no.~3, pp. 377--391, Mar. 2006.

\bibitem{7095576}
X.~Han, J.~Li, and H.~Gao, ``Efficient top-k retrieval on massive data,''
  \emph{IEEE Transactions on Knowledge and Data Engineering}, vol.~27, no.~10,
  pp. 2687--2699, Oct. 2015.

\bibitem{Mouratidis:2006:CMT:1142473.1142544}
K.~Mouratidis, S.~Bakiras, and D.~Papadias, ``Continuous monitoring of top-k
  queries over sliding windows,'' in \emph{The 2006 ACM SIGMOD International
  Conference on Management of Data}, Chicago, IL, USA, Jun. 2006.

\bibitem{7831369}
L.~H. U, J.~Zhang, K.~Mouratidis, and Y.~Li, ``Continuous top-k monitoring on
  document streams,'' \emph{IEEE Transactions on Knowledge and Data
  Engineering}, vol.~29, no.~5, pp. 991--1003, May 2017.

\bibitem{7166329}
X.~Miao, Y.~Gao, B.~Zheng, G.~Chen, and H.~Cui, ``Top-k dominating queries on
  incomplete data,'' \emph{IEEE Transactions on Knowledge and Data
  Engineering}, vol.~28, no.~1, pp. 252--266, Jan. 2016.

\bibitem{7845627}
X.~Han, J.~Li, and H.~Gao, ``Efficient top-k dominating computation on massive
  data,'' \emph{IEEE Transactions on Knowledge and Data Engineering}, vol.~29,
  no.~6, pp. 1199--1211, Jun. 2017.

\bibitem{8248785}
D.~Amagata, T.~Hara, and M.~Onizuka, ``Space filling approach for distributed
  processing of top-k dominating queries,'' \emph{IEEE Transactions on
  Knowledge and Data Engineering}, vol.~30, no.~6, pp. 1150--1163, Jun. 2018.

\bibitem{8267252}
P.~Ezatpoor, J.~Zhan, J.~M. Wu, and C.~Chiu, ``Finding top-$k$ dominance on
  incomplete big data using mapreduce framework,'' \emph{IEEE Access}, vol.~6,
  pp. 7872--7887, Jan. 2018.

\bibitem{Zhang:2010:TPT:1773653.1773676}
W.~Zhang, X.~Lin, Y.~Zhang, J.~Pei, and W.~Wang, ``Threshold-based
  probabilistic top-k dominating queries,'' \emph{The VLDB Journal}, vol.~19,
  no.~2, pp. 283--305, Apr. 2010.

\bibitem{10.1007/978-3-319-05810-8_26}
L.~Zhan, Y.~Zhang, W.~Zhang, and X.~Lin, ``Identifying top k dominating objects
  over uncertain data,'' in \emph{The 19th International Conference on Database
  Systems for Advanced Applications (DASFAA)}, Bali, Indonesia, Apr. 2014.

\bibitem{10.1007/978-3-319-11749-2_19}
G.~Li, C.~Luo, and J.~Li, ``Continuous monitoring of top-k dominating queries
  over uncertain data streams,'' in \emph{The 15th International Conference on
  Web Information Systems Engineering}, Thessaloniki, Greece, Oct. 2014.

\bibitem{Guttman:1984:RDI:602259.602266}
A.~Guttman, ``R-trees: A dynamic index structure for spatial searching,'' in
  \emph{The 1984 ACM SIGMOD International Conference on Management of Data},
  Boston, Massachusetts, Jun. 1984.

\bibitem{Wang2013}
Y.~Wang, X.~Li, X.~Li, and Y.~Wang, ``A survey of queries over uncertain
  data,'' \emph{Knowledge and Information Systems}, vol.~37, no.~3, pp.
  485--530, Apr. 2013.

\bibitem{DBLP:conf/caise/LeeL03}
T.~Lee and S.~Lee, ``{OMT:} overlap minimizing top-down bulk loading algorithm
  for r-tree,'' in \emph{The 15th Conference on Advanced Information Systems
  Engineering (CAiSE'03) - Information Systems for a Connected Society
  Workshop}, Klagenfurt/Velden, Austria, Jun. 2003.

\bibitem{GarciaR:1998:GAB:288692.288723}
Y.~J. Garc\'{\i}a~R, M.~A. L\'{o}pez, and S.~T. Leutenegger, ``A greedy
  algorithm for bulk loading r-trees,'' in \emph{The 6th ACM International
  Symposium on Advances in Geographic Information Systems}, Washington, D.C.,
  USA, Nov. 1998.

\bibitem{Papadias:2005:PSC:1061318.1061320}
D.~Papadias, Y.~Tao, G.~Fu, and B.~Seeger, ``Progressive skyline computation in
  database systems,'' \emph{ACM Transactions on Database System}, vol.~30,
  no.~1, pp. 41--82, Mar. 2005.

\bibitem{Amagata2016}
D.~Amagata, T.~Hara, and S.~Nishio, ``Sliding window top-k dominating query
  processing over distributed data streams,'' \emph{Distributed and Parallel
  Databases}, vol.~34, no.~4, pp. 535--566, Dec. 2016.

\bibitem{ALBORZI20076}
H.~Alborzi and H.~Samet, ``Execution time analysis of a top-down r-tree
  construction algorithm,'' \emph{Information Processing Letters}, vol. 101,
  no.~1, pp. 6 -- 12, Jan. 2007.

\bibitem{Arge:2008:PRP:1328911.1328920}
L.~Arge, M.~D. Berg, H.~Haverkort, and K.~Yi, ``The priority r-tree: A
  practically efficient and worst-case optimal r-tree,'' \emph{ACM Trans.
  Algorithms}, vol.~4, no.~1, pp. 9:1--9:30, Mar. 2008.

\bibitem{Theodoridis:1996:MPR:237661.237705}
Y.~Theodoridis and T.~Sellis, ``A model for the prediction of r-tree
  performance,'' in \emph{The Fifteenth ACM SIGACT-SIGMOD-SIGART Symposium on
  Principles of Database Systems}, Montreal, Quebec, Canada, Jun. 1996.

\bibitem{Yiu:2007:EPT:1325851.1325908}
M.~L. Yiu and N.~Mamoulis, ``Efficient processing of top-k dominating queries
  on multi-dimensional data,'' in \emph{The 33rd International Conference on
  Very Large Data Bases}, Vienna, Austria, Sep. 2007.

\end{thebibliography}
%



%

\begin{IEEEbiography}[{\includegraphics[width=1in,height=1.25in,clip,keepaspectratio]{./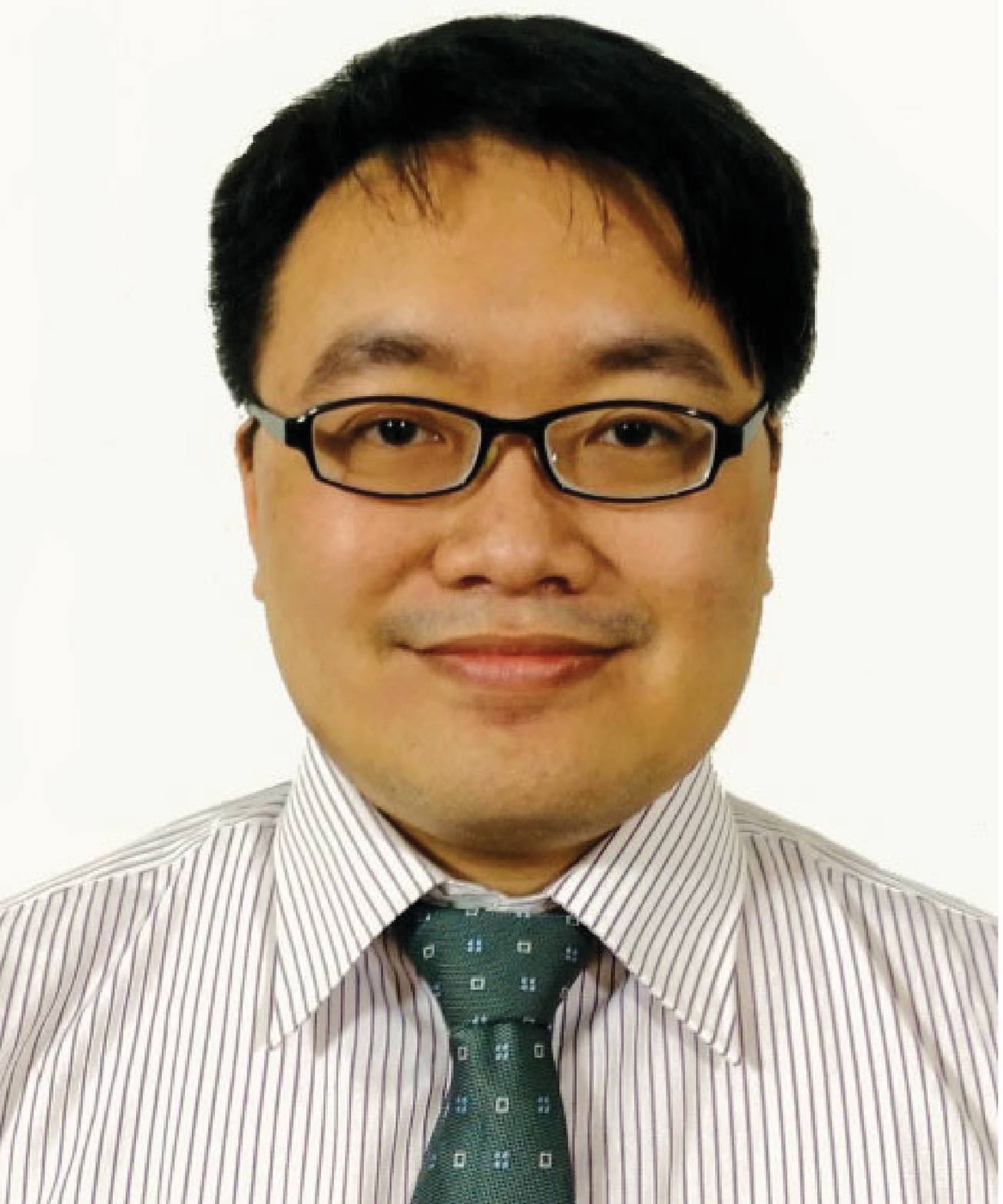}}]{Chuan-Chi Lai}
	is currently holding a post-doctoral position in the Department of Electrical and Computer Engineering at National Chiao Tung University, Taiwan, R.O.C. He received his Ph. D. in Computer Science and Information Engineering from National Taipei University of Technology (Taipei Tech), Taiwan in 2017. He won Excellent Paper Award and Best Paper Award in ICUFN 2015 and WOCC 2018 conferences, respectively. His current research interests are in the areas of data management and dissemination techniques in mobile wireless environments, mobile ad-hoc and sensor networks, distributed query processing over moving objects, and analysis and design of distributed algorithms.
\end{IEEEbiography}

\begin{IEEEbiography}[{\includegraphics[width=1in,height=1.25in,clip,keepaspectratio]{./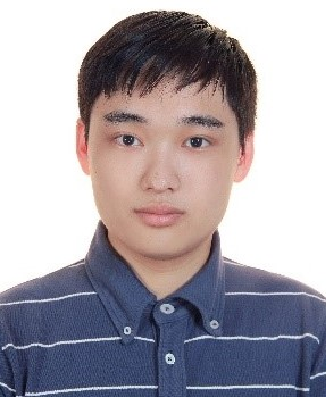}}]{Tien-Chun Wang}
	received his MS degree in the Department of Computer Science and Information Engineering, National Taipei University of Technology (Taipei Tech), TAIWAN, R.O.C, in 2017. He joined Applied Computing Laboratory in 2015 and interested in developing searching algorithms for distributed systems. Now, he is an engineer in the department of mobile device development, Compal Electronics, Inc., Taiwan.
\end{IEEEbiography}

\begin{IEEEbiography}[{\includegraphics[width=1in,height=1.25in,clip,keepaspectratio]{./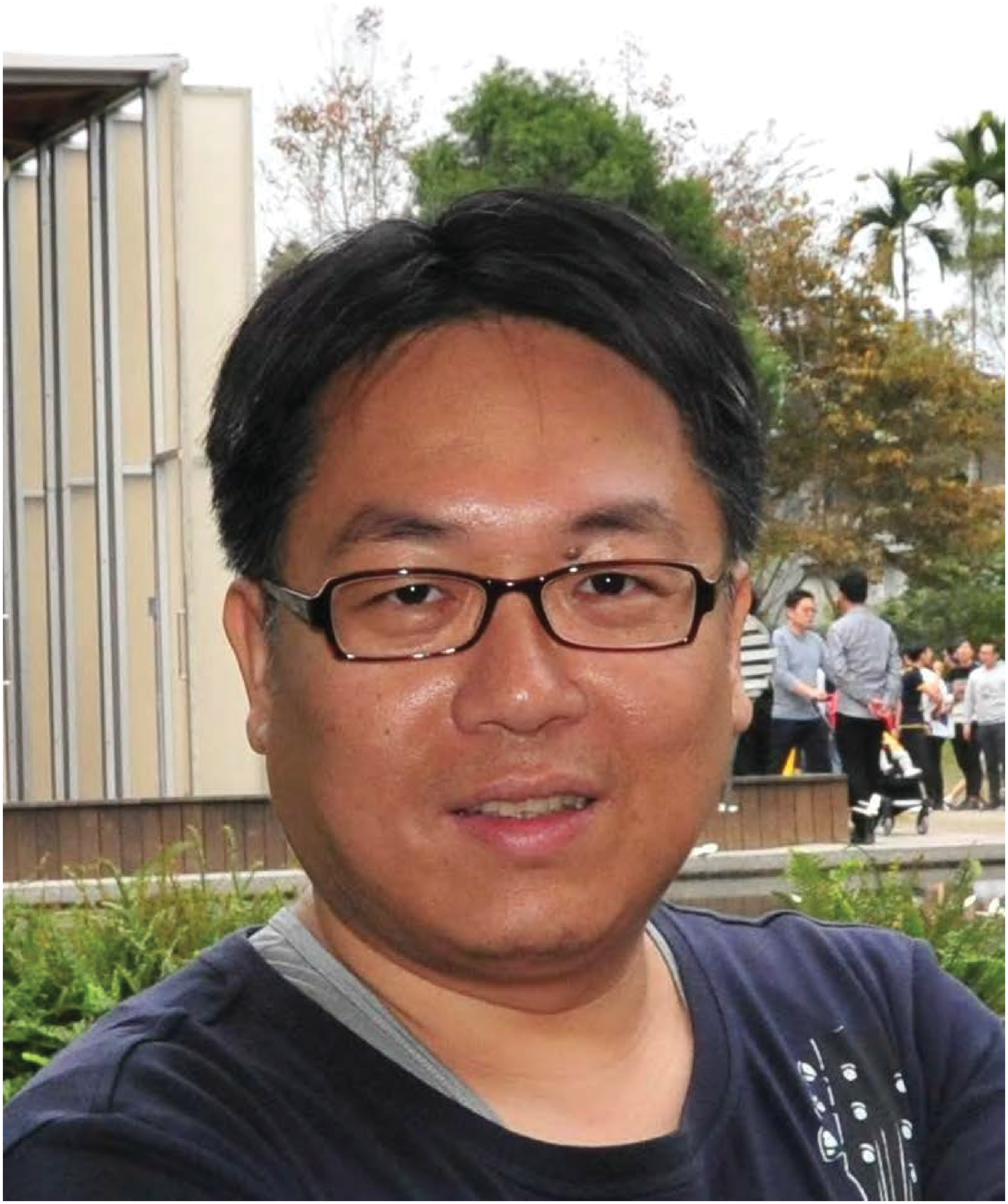}}]{Chuan-Ming Liu}
	(M'03) received the Ph.D. degree in computer science from Purdue University, West Lafayette, IN, USA, in 2002. 
	Dr. Liu is a professor in the Department of Computer Science and Information Engineering, National Taipei University of Technology (Taipei Tech), TAIWAN, R.O.C. In 2010 and 2011, he has held visiting appointments with Auburn University, Auburn, AL, USA, and the Beijing Institute of Technology, Beijing, China. He has services in many journals, conferences and societies as well as published more than 100 papers in many prestigious journals and international conferences. His current research interests include big data management and processing, uncertain data management, data science, spatial data processing, data streams, ad-hoc and sensor networks, and location based services.
\end{IEEEbiography}

\begin{IEEEbiography}[{\includegraphics[width=1in,height=1.25in,clip,keepaspectratio]{./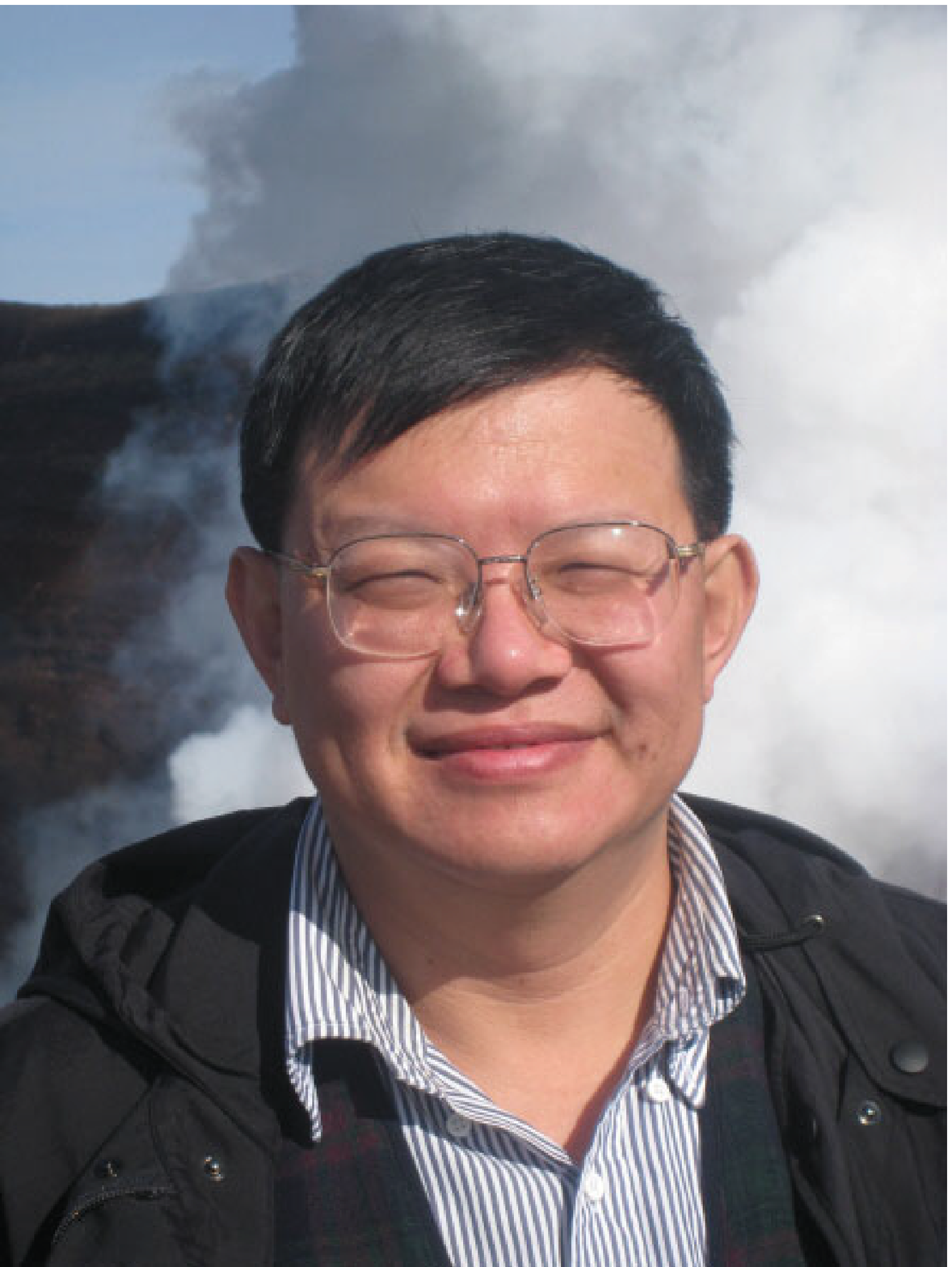}}]{Li-Chun Wang}
	(M'96 -- SM'06 -- F'11) received Ph. D. degree from the Georgia Institute of Technology, Atlanta, in 1996. From 1996 to 2000, he was with AT\&T Laboratories, where he was a Senior Technical Staff Member in the Wireless Communications Research Department. Since August 2000, he has joined the Department of Electrical and Computer Engineering of National Chiao Tung University in Taiwan and is jointly appointed by Department of Computer Science and Information Engineering of the same university.
	
	Dr. Wang was elected to the IEEE Fellow in 2011 for his contributions to cellular architectures and radio resource management in wireless networks. He won two Distinguished Research Awards of National Science Council, Taiwan in 2012 and 2017, respectively. He was the co-recipients of IEEE Communications Society Asia-Pacific Board Best Award (2015), Y. Z. Hsu Scientific Paper Award (2013), and IEEE Jack Neubauer Best Paper Award (1997).
	
	His current research interests are in the areas of software-defined mobile networks, heterogeneous networks, and data-driven intelligent wireless communications. He holds 19 US patents, and has published over 200 journal and conference papers, and co-edited a book, "Key Technologies for 5G Wireless Systems," (Cambridge University Press 2017).
\end{IEEEbiography}




\end{document}